\documentclass[aps,prd,reprint,superscriptaddress,amsmath,amssymb,longbibliography,floatfix]{revtex4-2}
\usepackage{fontspec,etoolbox,amsthm,mathtools,bm,graphicx,booktabs,array,xcolor}
\usepackage{placeins}
\usepackage{orcidlink}
\usepackage{hyperref}
\definecolor{linkink}{HTML}{163C56}
\hypersetup{colorlinks=true,linkcolor=blue,citecolor=blue,urlcolor=blue}
\newtheorem{theorem}{Theorem}

\newcommand{\ii}{\mathrm i}
\newcommand{\dd}{\mathrm d}
\newcommand{\bn}{\hat{\boldsymbol n}}
\newcommand{\bu}{\hat{\boldsymbol u}}
\newcommand{\bv}{\hat{\boldsymbol v}}
\newcommand{\GY}{\widetilde Y}
\newcommand{\Ci}{\operatorname{Ci}}
\newcommand{\Si}{\operatorname{Si}}
\newcommand{\sinc}{\operatorname{sinc}}
\newcommand{\Tnull}{\mathsf T}

\makeatletter
\newenvironment{widefigure}{\begin{figure*}[!t]\centering}{\end{figure*}}
\makeatother

\makeatletter
\newenvironment{localfigure}{\par\addvspace{8pt}\noindent\begin{minipage}{\columnwidth}\centering\renewcommand{\caption}[1]{\refstepcounter{figure}\@makecaption{\fnum@figure}{##1}}}{\end{minipage}\par\addvspace{8pt}}
\newenvironment{localtable}{\par\addvspace{8pt}\noindent\begin{minipage}{\columnwidth}\centering\def\@captype{table}\renewcommand{\caption}[1]{\refstepcounter{table}\@makecaption{\fnum@table}{##1}}}{\end{minipage}\par\addvspace{8pt}}
\makeatother

\hypersetup{pdftitle={Anisotropic overlap reduction functions for arbitrary equal-arm TDI channels: A fully analytic framework and efficient evaluation}}

\newcommand{\FormulaRepositoryURL}{}
\newcommand{\FormulaRepositoryVersion}{}
\newcommand{\FormulaRepositoryDOI}{}
\newcommand{\FormulaRepositoryLink}{%
  \ifdefempty{\FormulaRepositoryURL}{\textnormal{[Formula repository URL to be added]}}{\url{\FormulaRepositoryURL}}}
\newcommand{\FormulaRepositoryCitation}{%
  \FormulaRepositoryLink; %
  \ifdefempty{\FormulaRepositoryVersion}{release version to be added}{release \FormulaRepositoryVersion}; %
  \ifdefempty{\FormulaRepositoryDOI}{archival DOI to be added}{\href{https://doi.org/\FormulaRepositoryDOI}{doi:\FormulaRepositoryDOI}}}

\begin{document}
\newcommand{\SpeedOne}{48}
\newcommand{\SpeedHigh}{4948}
\newcommand{\SweepMilliseconds}{0.311}
\newcommand{\QuadSweepMilliseconds}{360.5}
\newcommand{\SweepImprovement}{4.39}
\newcommand{\MPMidGain}{649}

\title{Anisotropic overlap reduction functions for arbitrary equal-arm TDI channels: A fully analytic framework and efficient evaluation}
\author{Kun Zhou~\orcidlink{0009-0001-1099-8623}}

\author{Jin Li~\orcidlink{0000-0001-8538-3714}}

\email{{cqujinli1983@cqu.edu.cn}}

\affiliation{College of Physics, Chongqing University, Chongqing 401331, China and Department of Physics and Chongqing Key Laboratory for Strongly Coupled Physics, Chongqing University, Chongqing 401331, China}
\begin{abstract}
We establish a unified analytic framework for anisotropic overlap reduction functions (ORFs) of arbitrary finite-delay time-delay interferometry (TDI) channels in a static, reciprocal, equal-arm equilateral constellation. The construction combines the polynomial constraints for laser-noise cancellation, constellation symmetries, and directional-projection integration. A free basis of the admissible channel module separates channel generation from angular integration; two families of elementary angular kernels then generate all channel-pair intensity ORFs. The channel algebra is independent of gravitational-wave polarization, and the intensity-kernel reduction applies to tensor, vector, breathing, and longitudinal sectors. Tensor multipoles illustrate the framework here. Complete propagation effects are retained at arbitrary fixed finite angular order, with explicit examples through $\ell=5$. Prepared frequency batches, shared multiprecision expressions, and local expansions at removable delay points provide stable evaluation. In the prepared-batch benchmark, the analytic implementation is approximately 50 to 5000 times faster than optimized sky quadrature at matched error tolerances. Independent multiprecision sky integrations also validate prescribed mixed error targets down to $10^{-100}$ in the tested cases. The resulting kernels and generation rules supply reusable inputs for anisotropic multichannel background models.
\end{abstract}
\maketitle

\section{Introduction}
The stochastic gravitational-wave background (SGWB) can arise from unresolved astrophysical sources or processes in the early Universe. Its detection seeks a statistical signal in noisy data; its characterization aims to determine the sources and mechanisms that produced it. The mean intensity spectrum describes the frequency distribution of radiation, whereas anisotropy describes its variation across the sky. For an astrophysical background, angular structure is related to the spatial distribution of sources and to large-scale structure. Different background components can also have distinct frequency-dependent angular distributions~\cite{cusin17,bartolo}. Measurements of both spectral and angular information are therefore needed to separate background components. Multiple detector channels and constellation motion provide access to this information, provided that their directional responses are modeled accurately.

Correlation analyses of the SGWB rely on a forward relation between detector responses and sky statistics~\cite{allen99,allen97,romano}. In a spherical-harmonic convention, the channel cross spectrum has the form $S_{CC'}(f)=\sum_{\ell m}p_{\ell m}(f)\Gamma_{CC'}^{\ell m}(f)$: the coefficients $p_{\ell m}$ describe the sky and the overlap reduction functions $\Gamma_{CC'}^{\ell m}$ describe how each angular component projects into a channel pair. Isotropic analyses require only the monopole kernel. Anisotropic analyses require a response matrix spanning several multipoles, channel pairs, and frequencies. This matrix enters sky reconstruction, background parameter estimation, and comparisons between readouts. Missing components or uncontrolled approximations alter the forward model. Whether its angular information suffices to distinguish different skies is a separate inference problem involving noise, observation time, and scanning modulation.

Triangular constellations are central to planned space-based observations. LISA, Taiji, and TianQin have nominal arm lengths of approximately $2.5\times10^6$, $3.0\times10^6$, and $1.7\times10^5\,\mathrm{km}$, respectively~\cite{lisa2024,taiji2026,tianqin2025}. LISA studies have considered the SGWB and its angular distribution~\cite{bartolo}. Finite propagation is controlled by $a=2\pi fL/c$. The conventional long-wavelength, low-frequency, or small-antenna approximation requires $a\ll1$, a condition that does not hold throughout every mission's broad frequency window. We consider the static, equal-arm, equilateral idealization of these constellations and retain the complete propagation response of the one-way Doppler links and delayed readouts.

Finite-arm modeling is also relevant to ground-based observatories. The Einstein Telescope (ET) includes a candidate triangular configuration with $10\,\mathrm{km}$ arms, alongside a two-site L-shaped alternative~\cite{etdesign2020,etscience2023,etgeometry2026}. For the triangle, the broad $1\,\mathrm{Hz}$--$10\,\mathrm{kHz}$ reference window corresponds to $2.10\times10^{-4}\lesssim a\lesssim2.10$, placing its upper end in the finite-arm regime~\cite{etsensitivity2023}. We include this configuration as a propagation-scale comparison; applying the angular kernels to its optical readouts requires a separate response model and is left to future work.

The required response modeling has several established foundations. Estabrook and Wahlquist derived the Doppler response of a light link; Cornish and Rubbo formulated a LISA response including finite light-travel time~\cite{estabrook75,cornish03}. TDI combines delayed one-way measurements to cancel laser frequency noise~\cite{armstrong99}. Its algebraic description expresses admissible channels as solutions of delay-polynomial constraints~\cite{dhurandhar02,tdi}. Studies of optimal channels and signal--noise separation have further demonstrated the uses of different readouts~\cite{prince02,adams10}. These developments place background analysis on physically realizable noise-canceling channels. They also motivate a practical question: how can one obtain consistent multipole ORFs when a TDI combination is changed or added, without rebuilding the full angular calculation for each new channel pair?

The anisotropic response of space interferometers has been studied in detail. Kudoh and Taruya investigated the spherical-harmonic structure of antenna power beyond the long-wavelength approximation, its geometric selection rules, and its frequency-dependent angular sensitivity. They used rotations and reflections to relate the multipole responses of standard Sagnac and AET combinations to basic auto- and cross-correlation kernels~\cite{kudoh05}. Taruya and Kudoh studied perturbative low-frequency sky-map reconstruction with a scanning constellation~\cite{taruya05}. Bartolo et al. considered LISA anisotropy measurements through multipole signal-to-noise ratios, scanning, and Fisher analyses~\cite{bartolo}. These studies establish how geometry and observing strategy constrain directional information. Low-frequency reconstruction is limited by its expansion range, while a specified channel's response depends on its readout definition, and its sensitivity also depends on the noise model. A complete analytic treatment of arbitrary finite-delay channels additionally requires their algebraic constraints to be incorporated into the angular-response calculation.

Analytic integration has also progressed. Complete frequency responses and direction-averaged expressions for space interferometers have been obtained~\cite{liang19,zhang20}, and Hu et al. developed isotropic ORF calculations retaining the full propagation effects~\cite{huGR}. Zhou and Li developed a directional-projection method for anisotropic angular integrals and supplied a public formula library~\cite{zhouli,source}. We use this real-variable integration method with the complete light-link responses and delayed channel combinations defined here.

Long-wavelength expansions and direct numerical angular integration are useful complementary approaches, each with a defined range of applicability. A low-frequency expansion simplifies the direction-dependent propagation phase. Retaining only a few leading terms, however, generally fails to describe the transition near $a\sim1$ and the higher-frequency response accurately. Conversely, short-wavelength asymptotics assume $a\gg1$ and do not replace the complete low-frequency or transition kernels. An envelope estimate also omits multipole-specific oscillations and cancellation zeros. Numerical integration retains the complete response and can be validated by grid refinement and independent implementations. Increasing phase oscillations, however, require finer angular sampling, while small responses or strong cancellations require control of absolute error. Repeated sky integration over many frequencies, angular weights, and readouts can become a computational bottleneck in parameter inference. Reusable analytic expressions are therefore desirable, although their own low-frequency cancellation must be handled carefully; an analytic form alone does not establish a universal speed advantage.

The problem addressed here is to formulate a common anisotropic ORF representation for every admissible finite-delay TDI channel in the specified geometry. This requires identifying the channel space allowed by laser-noise cancellation and stating precisely in what algebraic sense a chosen set generates it. One must then determine how constellation symmetries relate channel-pair angular integrals, in particular the relationship between general ORFs and the $11$ and $12$ elementary functions. A common representation would provide a consistent basis for comparing readouts and a reusable analytic input for anisotropic forward modeling.

Three difficulties must be resolved together. First, laser-noise cancellation is a delay-polynomial identity: linear independence at one frequency does not establish completeness for finite-delay channels. Second, cross spectra contain many link pairs and angular weights, so a direct term-by-term treatment repeats integrals and increases the risk of inconsistent phases, harmonics, or channel ordering. Third, complete endpoint differences can undergo strong low-frequency cancellation, and frequency-dependent channel transformations can have singular denominators at discrete frequencies. A reliable calculation must distinguish the physical response from apparent coordinate singularities while preserving these cancellations in both analytic evaluation and numerical tests.

Our scope is a static, reciprocal, equal-arm equilateral triangle and channels formed from finitely many commuting delays. Moving constellations, varying arm lengths, and noncommuting delays require further dynamic TDI treatment~\cite{moving04,geometric05}. The channel-module construction is polarization independent. The two-kernel intensity reduction holds for $P=T,V,B,L$; this paper develops the common framework and uses an unpolarized tensor background for its explicit illustrations. Polarization coherence requires the corresponding kernel transformation rules. The extra-polarization formulas and their physical analysis are developed in the companion work. The integration construction applies to arbitrary fixed finite $\ell,m$; the displayed range $\ell\leq5$ is an illustrative selection. Sensitivity forecasts require additional mission noise and observing models.

Section II defines the complete link response, laser-noise constraints, and multipole conventions. Section III establishes channel completeness and the relation to the $11/12$ elementary functions. Section IV presents the angular integration, a representative finite expression, and low-frequency limits. Section V examines the frequency dependence and physical interpretation of the displayed multipoles. Section VI compares Michelson, Sagnac, and Relay readouts, discusses general delayed combinations and AET, and evaluates accuracy and computational efficiency. Section VII discusses applications and extensions.

The accompanying formula library~\cite{TDIFormulaLibrary} contains the final explicit expressions and supporting numerical data used in this paper; its public address is reserved as \FormulaRepositoryLink. A Jupyter notebook provides a component index and direct display of the final formulas. The catalogue order and the validation coverage of each polarization sector are stated separately.

\section{Complete response and angular conventions}
\subsection{Link response and laser noise}
Choose the vertices
\begin{equation}
 \begin{split}
 \boldsymbol x_1&=L(0,0,0),\qquad \boldsymbol x_2=L(1,0,0),\\
 \boldsymbol x_3&=L(1/2,\sqrt3/2,0),
 \end{split}
\end{equation}
with the $z$ axis normal to the constellation plane. Adopt the plane-wave convention $e^{+\ii\omega(t-\bn\cdot\boldsymbol x/c)}$, and write $a=\omega L/c$ and $z=e^{-\ii a}$. The measurement $y_{ij}$ is the fractional frequency shift measured at $j$ for light emitted at $i$. The arm direction $\bu_{ij}=(\boldsymbol x_i-\boldsymbol x_j)/L$ points opposite to light propagation. For the standard tensor polarizations $A=+,\times$, normalized by $e^A:e^{A'}=2\delta^{AA'}$, the full response is
\begin{align}\label{eq:link}
 F_{ij}^A&=-\frac12\bu_{ij}\bu_{ij}:e^A\,
 q_a(1+\bn\cdot\bu_{ij})
 e^{-\ii a\bn\cdot\boldsymbol x_j/L},\\
 q_a(d)&=\frac{1-e^{-\ii ad}}d
       =\ii a e^{-\ii ad/2}\sinc(ad/2).
       \label{eq:q}
\end{align}
We define $q_a(0)=\ii a$. The endpoint difference is retained, so the full link response remains finite when its projection denominator vanishes.

An arbitrary finite-delay channel has the form
\begin{equation}\label{eq:channel}
 C=\boldsymbol d_C\boldsymbol y,\qquad
 \boldsymbol d_C(z)\in\mathbb C[z]^6,
\end{equation}
with links ordered as $(12,21,13,31,23,32)$. The laser contribution is $y_{ij}^{\rm laser}=z c_i^{\rm laser}-c_j^{\rm laser}$. The TDI condition is therefore
\begin{equation}\label{eq:noise}
 \boldsymbol d_C B=0,\qquad
 B=\begin{pmatrix}
 z&-1&0\\-1&z&0\\z&0&-1\\-1&0&z\\
 0&z&-1\\0&-1&z
 \end{pmatrix}.
\end{equation}
Equation~\eqref{eq:noise} must hold as a polynomial identity; a combination that cancels noise only at isolated frequencies does not satisfy this definition. The three equal-arm round-trip difference channels are
\begin{align}\label{eq:M}
 M_1&=y_{21}+zy_{12}-y_{31}-zy_{13},\nonumber\\
 M_2&=y_{32}+zy_{23}-y_{12}-zy_{21},\\
 M_3&=y_{13}+zy_{31}-y_{23}-zy_{32}.\nonumber
\end{align}
Each satisfies Eq.~\eqref{eq:noise}. If a convention instead uses $X_i=(1-z^2)M_i$, its correlation functions must include the associated delay-filter factors.

\subsection{Multipole response and physical cross spectra}
With the Condon--Shortley phase and $\GY_{\ell m}=\sqrt{4\pi}Y_{\ell m}$, define
\begin{align}\label{eq:ORF}
 \Gamma_{CC'}^{T,\ell m}
 &=\frac1{8\pi}\int\GY_{\ell m}(\bn)
       \sum_{A=+,\times}F_C^A F_{C'}^{A*}\dd\Omega,\\
 F_C^A&=\sum_\lambda d_{C\lambda}F_\lambda^A.\nonumber
\end{align}
The harmonic inside the integral is not conjugated. If the directional power of each tensor polarization is $\mathcal P_T=\sum_{\ell m}p_{\ell m}\GY_{\ell m}$, using the matching statistical prefactor gives $S_{CC'}=\sum_{\ell m}p_{\ell m}\Gamma_{CC'}^{T,\ell m}$. A real intensity distribution obeys $p_{\ell,-m}=(-1)^mp_{\ell m}^*$. A complete nonnegative intensity distribution yields a positive semidefinite channel covariance; an individual multipole matrix need not be positive semidefinite.

An equivalent real angular basis is $S_{\ell m}^{c,s}=\sqrt2(\Re,\Im)\GY_{\ell m}$ for $m>0$. Its responses are
\begin{align}\label{eq:realharmonic}
 \Gamma_{\ell m}^{c}
 &=\frac{\Gamma_{\ell m}+(-1)^m\Gamma_{\ell,-m}}{\sqrt2},\\
 \Gamma_{\ell m}^{s}
 &=\frac{\Gamma_{\ell m}-(-1)^m\Gamma_{\ell,-m}}{\ii\sqrt2}.
\end{align}
For $m=0$, we use $S^c_{\ell0}=\GY_{\ell0}$. Figure~\ref{fig:sky} illustrates positive intensity templates. Here $\ell$ specifies angular order and $m$ labels an azimuthal component in the chosen frame. The constant term keeps the intensity nonnegative, while the harmonic selects the angular structure to be studied. The corresponding multipole ORFs project this structure into the channel spectra. Signed multipole responses therefore do not represent positive or negative physical intensities.

\begin{widefigure}
\includegraphics[width=.82\textwidth,height=.27\textheight,keepaspectratio]{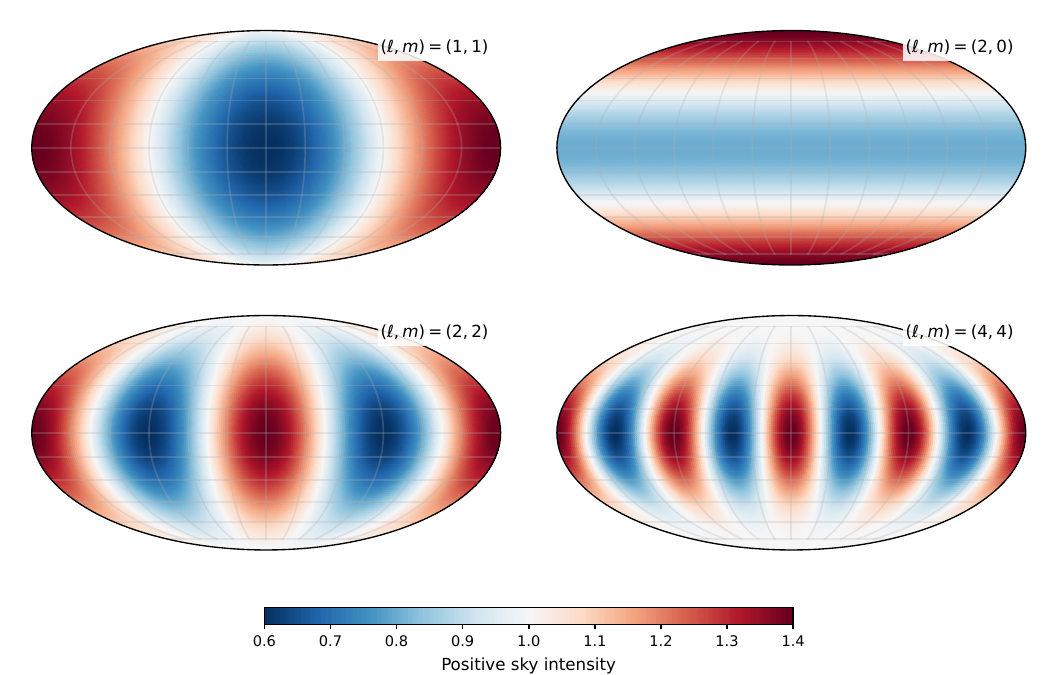}
\caption{Positive intensity templates $1+0.4S^c_{\ell m}/\max|S^c_{\ell m}|$. The $(1,1)$ template is an in-plane dipole, $(2,0)$ is a latitudinal quadrupole, and $(2,2)$ and $(4,4)$ show different azimuthal structures. Each template contributes to channel cross spectra through its multipole ORF.}
\label{fig:sky}
\end{widefigure}

\section{Generating arbitrary TDI ORFs from the $11/12$ functions}

\subsection{Two elementary functions for arbitrary equal-arm TDI}
\label{sec:gen-theorem}

Consider any finite delay-polynomial channel on a static, reciprocal, equal-arm equilateral triangle that cancels all three laser noises as a polynomial identity. Write $z=e^{-\ii a}$, order the links as $(12,21,13,31,23,32)$, and define
\begin{equation}\label{eq:gen-sagnac}
 S=y_{12}-y_{21}-y_{13}+y_{31}+y_{23}-y_{32}.
\end{equation}
This is the equal-arm symmetric Sagnac channel with its common delay removed. It obeys
\begin{equation}\label{eq:gen-sum}
 M_1+M_2+M_3=(z-1)S.
\end{equation}
Channel completeness and threefold rotational covariance together determine the basic angular integrals required for any readout.

The construction has three distinct layers. Laser-noise cancellation fixes the channel module, geometric transformations reduce its intensity kernels, and polarization contractions determine the functions that are integrated. The first layer is independent of polarization; the displayed matrix applies to the nonchiral intensity sectors specified in the theorem. General polarization coherence retains the linear channel transformation but requires its own angular transformation rules.

\begin{theorem}\label{thm:gen}
For any $\boldsymbol d=(p,q,r,s,t,u)\in\mathbb C[z]^6$ satisfying $\boldsymbol d B=0$ identically, there are unique polynomials $A,B_C,C$ such that
\begin{align}\label{eq:gen-polybasis}
 C_{\boldsymbol d}&=A M_1+B_C M_2+C S,\\
 A&=q+t,\quad B_C=-p-r,\quad C=q+s+t.
 \label{eq:gen-polyweights}
\end{align}
For $z\ne1$ the same channel has the unique representation $C_{\boldsymbol d}=\sum_i c_iM_i$, with
\begin{equation}\label{eq:gen-weights}
 c_1=A+\frac{C}{z-1},\quad
 c_2=B_C+\frac{C}{z-1},\quad
 c_3=\frac{C}{z-1}.
\end{equation}
For an equilateral triangle, the ORFs of any pair of such channels, at arbitrary angular order, are therefore generated by two elementary functions, $g_{\ell m}^P=\Gamma_{11}^{P,\ell m}$ and $h_{\ell m}^P=\Gamma_{12}^{P,\ell m}$:
\begin{equation}\label{eq:gen-final}
 \Gamma_{CC'}^{P,\ell m}
 =\boldsymbol c_C\,\mathsf K_{\ell m}^P\,
 \boldsymbol c_{C'}^\dagger.
\end{equation}
Here $P=T,V,B,L$ labels the polarization intensity kernels defined in this paper and its companion, $\epsilon=(-1)^\ell$, $\varphi=e^{2\pi\ii m/3}$, and
\begin{equation}\label{eq:gen-kmatrix}
 \mathsf K_{\ell m}^P=
 \begin{pmatrix}
 g&h&\epsilon\varphi^2h\\
 \epsilon h&\varphi g&\varphi h\\
 \varphi^2h&\epsilon\varphi h&\varphi^2g
 \end{pmatrix}.
\end{equation}
At the discrete frequencies $z=1$, Eq.~\eqref{eq:gen-final} is understood as the continuous limit of the full response of the finite-delay channel.
\end{theorem}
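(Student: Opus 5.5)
The proof splits into three independent parts, matching the three layers described before the statement: the algebra of the channel module, the passage from the polynomial basis to the $M_i$ basis, and the symmetry reduction of the $3\times3$ correlation matrix. First, for the algebraic part, I will write $\boldsymbol d B=0$ column by column:
\begin{align*}
&zp-q+zr-s=0,\\
&-p+zq+zt-u=0,\\
&-r+zs-t+zu=0 .
\end{align*}
The first two equations express $s$ and $u$ polynomially in $(p,q,r,t)$. Substituting into the third leaves one constraint,
\[
(z-1)\bigl[(z+1)(r+t)+z(p+q)\bigr]=0 ,
\]
so the bracket vanishes identically. Hence the solution module is the kernel of a single nonzero row over $\mathbb C[z]$ acting on $(p,q,r,t)$, and it has rank $3$.

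Next I will read off the coefficient vectors
\[
M_1=(z,1,-z,-1,0,0),\quad M_2=(-1,-z,0,0,z,1),\quad S=(1,-1,-1,1,1,-1),
\]
and verify that each satisfies Eq.~\eqref{eq:noise}. For existence, I form the residual $\boldsymbol d-(AM_1+B_CM_2+CS)$ with the weights of Eq.~\eqref{eq:gen-polyweights}. Using $s=zp-q+zr$, $u=-p+zq+zt$ and the bracket constraint, I expect enough of its components to vanish that the remaining ones are forced to zero by the same three equations. For uniqueness, I will exhibit a $3\times3$ minor of the matrix with rows $(M_1,M_2,S)$ that is a nonzero polynomial. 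Then $M_1,M_2,S$ are independent over $\mathbb C(z)$, so the coefficients are unique in $\mathbb C(z)$ and a fortiori in $\mathbb C[z]$. This also shows that the three vectors form a free basis.

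The second part, Eq.~\eqref{eq:gen-weights}, follows from Eq.~\eqref{eq:gen-sum}. For $z\neq1$ it gives $S=(M_1+M_2+M_3)/(z-1)$, and substituting into Eq.~\eqref{eq:gen-polybasis} yields the stated $c_i$. Uniqueness at fixed $z\ne1$ holds because $M_1,M_2,M_3$ are independent there: $M_3$ is $(z-1)S-M_1-M_2$, and the same minor argument applies. Linearity of $F_C^A$ in $\boldsymbol d_C$ then gives $F_C^A=\sum_ic_iF_{M_i}^A$. Inserting this into Eq.~\eqref{eq:ORF} and using sesquilinearity gives Eq.~\eqref{eq:gen-final} with $\mathsf K_{ij}=\Gamma^{P,\ell m}_{M_iM_j}$. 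At $z=1$, the polynomial form Eq.~\eqref{eq:gen-polybasis} is entire in $a$, so the response is continuous there; the rational weights reproduce it as a limit because the product $\boldsymbol c\,\mathsf K\,\boldsymbol c^\dagger$ equals a polynomial-weighted expression in the $M_1,M_2,S$ kernels.

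The third part reduces $\mathsf K$ to $g$ and $h$ by symmetry. The position phase $e^{-\ii a\bn\cdot\boldsymbol x_j/L}$ makes a rotation about vertex~1 awkward. I will therefore rotate about the centroid $\boldsymbol x_0$: shifting every $\boldsymbol x_j$ by $-\boldsymbol x_0$ multiplies all responses by a common phase, which cancels in $F_CF_{C'}^*$. The rotation $R$ by $2\pi/3$ about $z$ permutes $1\to2\to3$, maps $\bu_{ij}\to\bu_{\sigma i\,\sigma j}$, and sends $M_1\to M_2\to M_3$. It preserves the polarization sum $\sum_A e^A\otimes e^{A*}$, which is rotation invariant for each $P$. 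Changing variables $\bn\to R\bn$ gives $\GY_{\ell m}(R\bn)=e^{\mp 2\pi\ii m/3}\GY_{\ell m}(\bn)$. This yields $\Gamma_{22}=\varphi g$, $\Gamma_{33}=\varphi^2g$, $\Gamma_{23}=\varphi h$ and $\Gamma_{31}=\varphi^2h$, with the sign of the exponent fixed by matching Eq.~\eqref{eq:gen-kmatrix}.

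For the remaining entries I will use the reflection swapping vertices $2$ and $3$, combined with the parity $\bn\to-\bn$. Each reflection contributes an orientation-dependent sign, and the harmonic picks up $(-1)^\ell=\epsilon$. The reflection maps $M_1\to -M_1$ and $M_2\to -M_3$, so the sign on each side of $\Gamma_{13}$ cancels. The combination relates $\Gamma_{13}$ to $\Gamma_{12}$ and $\Gamma_{21}$ to $\Gamma_{12}$, producing the $\epsilon$ and $\epsilon\varphi^2$ entries. Hermiticity is not available for $m\neq0$, because the harmonic is unconjugated, so it cannot replace these steps.

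I expect this phase and sign bookkeeping to be the main obstacle. Three sources must be tracked consistently: reflection signs on $M_i$, the phase of $\GY_{\ell m}$ under the combined reflection and inversion, and the behaviour of the propagation factor $q_a(1+\bn\cdot\bu)$ under $\bn\to-\bn$. The last may require pairing the reflection with an exchange of conjugation. I will check every entry against direct numerical sky quadrature at a test frequency for small $\ell,m$.
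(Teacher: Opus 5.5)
Your algebraic parts are essentially sound. The existence check you only anticipate does close: after substituting $s$ and $u$, the bracket you derive equals $p+q+r+s+t+u$, and with it and the first two constraints every residual component vanishes.

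The genuine gap is the step producing the entries $\epsilon h$, $\epsilon\varphi h$, $\epsilon\varphi^2h$, i.e.\ the exchange relation $\Gamma_{21}^{P,\ell m}=\epsilon\,\Gamma_{12}^{P,\ell m}$ at fixed $m$. You propose to get it from the $(23)$ mirror combined with parity. But an in-plane mirror sends $\phi\mapsto2\phi_0-\phi$, and so does its product with parity (a rotation by $\pi$ about an in-plane axis). Either one therefore maps $\GY_{\ell m}$ to a phase times $\GY_{\ell,-m}$, not to $\epsilon\GY_{\ell m}$. For example, the $(23)$ mirror alone gives $\Gamma_{13}^{\ell m}=\varphi^2\Gamma_{12}^{\ell,-m}$.

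This is structural, not bookkeeping. Each mirror, and each use of pointwise Hermiticity $K_{ji}=K_{ij}^*$, both swaps the orbits $\{12,23,31\}$ and $\{21,32,13\}$ and flips $m$. The $C_3$ rotation and the detector-plane reflection do neither. Hence every composition of these operations expresses $\Gamma_{21}^{\ell m}$ only through $h_{\ell,-m}$ or its conjugate, never through $h_{\ell m}$. What you actually obtain is $\Gamma_{21}^{\ell m}=h_{\ell,-m}$ and reality conditions such as $h_{\ell m}=(-1)^mh_{\ell m}^*$.

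The missing ingredient is the non-geometric inversion identity of the complete response. You flag it as the main obstacle but do not supply it, and it makes the mirror unnecessary. Take $e^A(-\bn)=\eta_Ae^A(\bn)$. The identity $q_a(d)=-e^{-\ii ad}q_a^*(d)$ together with $\boldsymbol x_i-\boldsymbol x_j=L\bu_{ij}$ gives $F_{ij}^A(-\bn)=-\eta_Az\,F_{ji}^{A*}(\bn)$, without splitting the endpoint difference. Each arm enters $M_i$ as the symmetric round trip $y_{ji}+zy_{ij}$, so $F_{M_i}^A(-\bn)=-\eta_Az^2F_{M_i}^{A*}(\bn)$. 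Hence $K_{ij}(-\bn)=K_{ji}(\bn)$, and with $\GY_{\ell m}(-\bn)=\epsilon\GY_{\ell m}(\bn)$ one gets $\Gamma_{ji}^{\ell m}=\epsilon\Gamma_{ij}^{\ell m}$. Combined with $C_3$ covariance this fills in $\Gamma_{21}=\epsilon h$, $\Gamma_{32}=\epsilon\varphi h$ and $\Gamma_{13}=\epsilon\Gamma_{31}=\epsilon\varphi^2h$. This is the paper's route.

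Two smaller repairs are also needed.
\begin{itemize}
\item Compute the $C_3$ phase rather than fixing it by matching the target. The counterclockwise rotation realizing $1\to2\to3$ gives $\GY_{\ell m}(R\bn)=\varphi\GY_{\ell m}(\bn)$, hence $\Gamma_{i+1,j+1}=\varphi\Gamma_{ij}$.
\item A nonzero polynomial minor yields uniqueness only for generic $z$. For every $z\neq1$ on the unit circle, use a minor with no zeros there, e.g.\ the $(21,31,32)$ minor of the rows $(M_1,M_2,S)$, which equals $z$. Alternatively, use the paper's $z$-independent right inverse.
\end{itemize}
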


\textit{Proof of channel completeness.} The three noise constraints are
\begin{align}\label{eq:gen-noise-components}
 z(p+r)-q-s&=0,\\
 -p+z(q+t)-u&=0,\\
 -r+zs-t+zu&=0.
\end{align}
Their sum is $(z-1)(p+q+r+s+t+u)=0$. Since the polynomial ring has no zero divisors, $p+q+r+s+t+u=0$. The six link coefficients of $A M_1+B_CM_2+C S$ are
\begin{equation}\label{eq:gen-expanded}
 \begin{split}
 (&zA-B_C+C,\ A-zB_C-C,\ -zA-C,\\
  &-A+C,\ zB_C+C,\ B_C-C).
 \end{split}
\end{equation}
Substitution of Eq.~\eqref{eq:gen-polyweights}, followed by use of the noise constraints and vanishing coefficient sum, gives $(p,q,r,s,t,u)$. Conversely, applying the three coefficient extractions $q+t$, $-p-r$, and $q+s+t$ to Eq.~\eqref{eq:gen-expanded} returns $A,B_C,C$, respectively. The representation is unique. Equivalently, the matrix $U$ whose rows are the link coefficients of $M_1,M_2,S$ has a $z$-independent right inverse $R$, with $UR=I_3$. These channels thus form a free basis of the polynomial module of laser-noise-canceling channels. The argument also applies to Laurent polynomials if integer advances are allowed.

Equation~\eqref{eq:gen-sum} immediately gives the rational weights in Eq.~\eqref{eq:gen-weights}. The three $M_i$ form a basis over $\mathbb C(z)$ but do not generate the full polynomial module. In particular, the unique $M$ weights of $S$ are all $(z-1)^{-1}$. It would therefore be incorrect to identify all finite TDI channels with polynomial combinations of the three $M_i$. The ordinary vertex Sagnac channel is another useful example:
\begin{align}\label{eq:gen-alpha}
 \alpha_1={}&y_{21}+z y_{32}+z^2y_{13}
 -y_{31}-z y_{23}-z^2y_{12}\\
 ={}&(1-z)M_1-zS.
\end{align}
Its two three-link paths have the same laser residual $(z^3-1)c_1^{\rm laser}$, which cancels in their difference.

\textit{Proof of the two-function matrix.} A rotation through $2\pi/3$ about the triangle center sends $1\to2\to3\to1$ and $M_i\to M_{i+1}$. The polarization intensity kernel transforms covariantly; the common response phase from a translation of the coordinate origin cancels in the product. Changing the integration variable and using $\GY_{\ell m}(R\bn)=\varphi\GY_{\ell m}(\bn)$ gives
\begin{equation}\label{eq:gen-rotation}
 \Gamma_{i+1,j+1}^{P,\ell m}
 =\varphi\Gamma_{ij}^{P,\ell m},
\end{equation}
with indices understood modulo 3.

An exchange relation follows from the complete propagation response. For the real polarization tensors used here, choose $e^A(-\bn)=\eta_Ae^A(\bn)$, with $\eta_A=\pm1$. The full single-link response satisfies
\begin{equation}\label{eq:gen-link-inversion}
 F_{ij}^A(-\bn)=-\eta_Az F_{ji}^{A*}(\bn).
\end{equation}
This follows directly from $q_a(d)=-e^{-\ii ad}q_a^*(d)$ and $\boldsymbol x_i-\boldsymbol x_j=L\bu_{ij}$, without splitting the endpoint difference. The equal-arm round-trip channels consequently obey
\begin{equation}\label{eq:gen-m-inversion}
 F_{M_i}^A(-\bn)=-\eta_Az^2F_{M_i}^{A*}(\bn).
\end{equation}
The summed kernels satisfy $K_{ij}(-\bn)=K_{ji}(\bn)$. With $\GY_{\ell m}(-\bn)=(-1)^\ell\GY_{\ell m}(\bn)$, this yields
\begin{equation}\label{eq:gen-exchange}
 \Gamma_{ji}^{P,\ell m}=(-1)^\ell\Gamma_{ij}^{P,\ell m}.
\end{equation}
Equations~\eqref{eq:gen-rotation} and \eqref{eq:gen-exchange} generate all nine entries of Eq.~\eqref{eq:gen-kmatrix}; for odd $\ell$ they also force $g=0$. Reflection in the detector plane eliminates intensity multipoles with odd $\ell+m$. The general negative-$m$ relation is $\Gamma_{ij}^{\ell,-m}=(-1)^m\Gamma_{ji}^{\ell m*}$, which reduces to $\Gamma_{ij}^{\ell,-m}=\Gamma_{ij}^{\ell m*}$ for the allowed nonzero components. Finally, substituting the linear channel expansion into the angular integral proves Eq.~\eqref{eq:gen-final}.

\subsection{Mirror constraints and independent elementary functions}
\label{sec:symmetry-independent}
An in-plane reflection supplies a further constraint beyond rotation and exchange.
In the adopted coordinates the affine position map
$\mathcal M(\boldsymbol x)=\mathsf R\boldsymbol x+L(1,0,0)$,
with $\mathsf R=\operatorname{diag}(-1,1,1)$, exchanges vertices 1 and 2 and fixes vertex 3.
Directions transform only under its linear part, $\bn\mapsto\mathsf R\bn$,
or $\phi\mapsto\pi-\phi$. Write $\sigma=(12)$.
Reflection interchanges the two arms of each round-trip difference channel;
the resulting common minus sign and the common translation phase cancel in the intensity kernel.
Substitution in the complete link response gives
\begin{equation}\label{eq:sym-mirror-response}
 F_{M_i}^{A}(\mathsf R\bn)=-\rho_A e^{\ii a n_x}F_{M_{\sigma i}}^{A}(\bn),\quad\rho_A=\pm1.
\end{equation}
The polarization reflection sign cancels in each same-mode pairing. For the nonchiral polarization intensity sectors considered here,
\begin{equation}\label{eq:sym-mirror-kernel}
 K_{ij}^{P}(\mathsf R\bn)=K_{\sigma i,\sigma j}^{P}(\bn).
\end{equation}
Since $\GY_{\ell m}(\mathsf R\bn)=\GY_{\ell,-m}(\bn)$ in the Condon--Shortley convention,
\begin{equation}\label{eq:sym-mirror}
 \Gamma_{ij}^{P,\ell m}=\Gamma_{\sigma i,\sigma j}^{P,\ell,-m}.
\end{equation}
The general pairing identity and the inversion property of the elementary round-trip kernels give
\begin{align}\label{eq:sym-conjugate}
 \Gamma_{ij}^{P,\ell,-m}
 &=(-1)^m\Gamma_{ji}^{P,\ell m*},\nonumber\\
 \Gamma_{ij}^{P,\ell m*}
 &=(-1)^{\ell+m}\Gamma_{ij}^{P,\ell,-m}.
\end{align}
These relations are consistent with the established geometric selection rules~\cite{bartolo,kudoh05}.
The second identity requires the inversion property of the complete round-trip response;
field reality alone does not extend it to arbitrary delayed combinations.

Combining reflection, exchange, and $C_3$ covariance yields, for the allowed nonzero components,
\begin{align}\label{eq:sym-lock}
 h_{\ell m}^{P*}&=(-1)^\ell h_{\ell m}^{P},\nonumber\\
 g_{\ell m}^{P*}&=e^{2\pi\ii m/3}g_{\ell m}^{P},\qquad \ell\ {\rm even}.
\end{align}
Thus $h$ is real for even $\ell$ and purely imaginary for odd $\ell$, whereas
$e^{\ii m\pi/3}g$ is real. The equivalent amplitude $e^{-\ii m\pi/6}g$ is also real:
the allowed $g$ components have even $m$, and the two real conventions differ by $(-1)^{m/2}$.
Each allowed $11$ component therefore contains one real amplitude and a fixed azimuthal phase.

For each polarization intensity sector, an independent set is
\begin{align}\label{eq:sym-independent}
 g:&\quad \ell=2k,\quad m=0,2,\ldots,2k,\nonumber\\
 h:&\quad 0\leq m\leq\ell,\quad\ell+m\ {\rm even}.
\end{align}
The negative-$m$ components follow from Eq.~\eqref{eq:sym-conjugate}, and the two-function matrix generates the remaining channel pairs.
For a displayed upper order $\ell_{\max}$, with $k_{\max}=\lfloor\ell_{\max}/2\rfloor$,
the numbers of real amplitudes are
\begin{equation}\label{eq:sym-count}
 N_g=\frac{(k_{\max}+1)(k_{\max}+2)}2,
 \quad N_h=\left\lfloor\frac{(\ell_{\max}+2)^2}{4}\right\rfloor.
\end{equation}
At $\ell_{\max}=5$ this gives $6+12=18$ elementary functions, each with one real degree of freedom.
This reduction concerns the response representation and its validation; it does not count sky parameters recoverable from noisy data.
The zeros and phase constraints refer to the adopted detector frame. A general rigid rotation mixes $m$ through a Wigner transformation.
Channels with arbitrary complex delay weights are assembled using the generation formula and need not preserve all selection rules of the elementary autocorrelations.

\subsection{Discrete frequencies and scope}
\label{sec:gen-limits}
The greatest common divisor of the $3\times3$ minors of the coefficient matrix $W$ of the three $M_i$ is $z-1$. For example, the minor formed by columns $(12,13,23)$ is $-z(z-1)$. On the physical unit circle, $W$ loses rank only at $z=1$, where its rank is 2; it retains rank 3 at $z=-1$. Solving from a selected pair of link coefficients can introduce a denominator $1-z^2$ and thus an additional coordinate singularity at $z=-1$, but this is not a required basis degeneracy. The matrix $U=(M_1,M_2,S)$ has rank 3 for every $z$. At $z=1$ the single-frequency noise matrix $B$ itself loses rank, and its null space becomes four dimensional. This does not raise the rank of the finite TDI polynomial module: channels satisfying the noise condition identically remain generated by $U$.

It is useful to express a channel in the polynomial free basis $U$ first, so that coordinate degeneracies are distinguished from the physical response. Define
\begin{equation}\label{eq:gen-H}
 H=\begin{pmatrix}1&0&0\\0&1&0\\
 (z-1)^{-1}&(z-1)^{-1}&(z-1)^{-1}\end{pmatrix}.
\end{equation}
Then $\Gamma_U=H\mathsf K H^\dagger$ has removable singularities at these discrete frequencies and is defined there by continuity. Two short examples are
\begin{align}\label{eq:gen-S-elements}
 \Gamma_{M_1S}^{P,\ell m}
 &=\frac{g+(1+\epsilon\varphi^2)h}{z^*-1},\\
 \Gamma_{SS}^{P,\ell m}
 &=\frac{(1+\varphi+\varphi^2)[g+(1+\epsilon)h]}
 {|z-1|^2}.
\end{align}
The second expression permits only $m$ divisible by 3 in the Sagnac autocorrelation ORF, and vanishes for odd $\ell$. At $a=2\pi n$, numerator and denominator must be taken to their common limit rather than evaluated separately at zero. The complete link function $q_a$ is entire in $a$ and obeys $|q_a(d)|\leq|a|$ for real $a,d$. The finite delay coefficients are bounded on compact frequency intervals, so the continuous limit can be taken under the sky integral. This independently guarantees finite physical ORFs. Although the free basis itself is nondegenerate, its rational ORF representation still requires the cancellations to be retained, either through continuous limits or through local series.

Let $N(a)$ and $Q(a)$ denote the numerators of the first and second expressions in Eq.~\eqref{eq:gen-S-elements}. The complete responses imply $N(a_n)=0$ and $Q(a_n)=Q'(a_n)=0$. Since $z^*-1=\ii(a-a_n)+O((a-a_n)^2)$, the removable values are
\begin{equation}\label{eq:gen-removable}
 \begin{split}
 \Gamma_{M_1S}(a_n)&=-\ii N'(a_n),\\
 \Gamma_{SS}(a_n)&=\tfrac12Q''(a_n),\qquad a_n=2\pi n .
 \end{split}
\end{equation}

The two-function construction relies on equilateral geometry and a single reciprocal delay $z$. Unequal arms generally break the threefold rotational symmetry and alter the channel module, requiring a new channel basis and elementary angular kernels. Noncommuting delays in dynamic TDI lie outside the polynomial theorem. Within the stated assumptions, there is no upper bound on the number of channels, finite delay-polynomial degree, or finite angular order. A rigid rotation into another inertial frame requires a standard Wigner rotation within each $\ell$, rather than new angular integrations. The displayed and tabulated range $\ell\leq5$ is a choice of examples, not a limitation of the theorem or integration method.

\section{Analytic elementary functions}

\subsection{Directional projections and analytic integration}
\label{sec:public-integral}
We evaluate the angular integrals using the directional-projection method of Zhou and Li~\cite{zhouli}. For two nonparallel arm directions $\bu$ and $\bv$, define $c_{uv}=\bu\cdot\bv$, $s_{uv}=\sqrt{1-c_{uv}^2}>0$, and the projections $\xi=\bn\cdot\bu$ and $\zeta=\bn\cdot\bv$. Their allowed values form the ellipse
\begin{equation}
 E=\{(\xi,\zeta):\Delta\geq0\},\qquad
 \Delta=s_{uv}^2-\xi^2-\zeta^2+2c_{uv}\xi\zeta .
\end{equation}
With $\boldsymbol e_2=(\bv-c_{uv}\bu)/s_{uv}$ and $\boldsymbol e_3=\bu\times\bv/s_{uv}$, the two points on the sphere with a given pair of projections are
\begin{equation}
 \bn_\pm=\xi\bu+
 \frac{\zeta-c_{uv}\xi}{s_{uv}}\boldsymbol e_2
 \pm\frac{\sqrt\Delta}{s_{uv}}\boldsymbol e_3,\qquad
 \dd\Omega_\pm=\frac{\dd\xi\dd\zeta}{\sqrt\Delta}.
\end{equation}
The basic link-pair integral consequently takes the form
\begin{align}\label{eq:public-ellipse}
 G_{\lambda\mu}^{P,\ell m}
 ={}&\frac1{8\pi}\int_E\frac{\dd\xi\dd\zeta}{\sqrt\Delta}
 C_P(\xi,\zeta)q_a(1+\xi)
 q_a(1+\zeta)^* \nonumber\\
 &\times\sum_{\epsilon=\pm}\GY_{\ell m}(\bn_\epsilon)
 e^{-\ii(\omega/c)\bn_\epsilon\cdot
 (\boldsymbol r_\lambda-\boldsymbol r_\mu)}.
\end{align}
Here $\boldsymbol r_\lambda$ is the receiver position and $C_P$ is the contraction numerator after the specified polarization sum. Both branches of the sphere and the receiver-position phase must be retained. Parallel and antiparallel arm pairs are obtained from the continuous limits of the complete responses. Each projection integral is evaluated in a frame adapted to the arm pair and rotated back to the common frame using spherical harmonics.

There is no angular-order cutoff in this construction. For each fixed finite $\ell$, a spherical harmonic is the restriction of a homogeneous harmonic polynomial of degree $\ell$. In the coplanar geometry, the receiver-position phase is identical on the two branches. Adding those branches cancels odd powers of the normal component, while even powers become integer powers of $\Delta$. Increasing $\ell$ therefore increases the degree and number of projection moments entering the same real angular integrals; it does not require a different angular-integration principle. The channel-generation theorem likewise holds for arbitrary $\ell,m$. The explicit formula library and plots accompanying this paper stop at $\ell=5$ as an illustrative implementation; formulas at higher order have not all been tabulated in that release.

The tabulated equilateral elementary functions are finite combinations of exponentials, frequency powers, $\ln a$, $\Ci$, $\Si$, and $E_1$. The final expressions use the geometric, polarization, and delay conventions specified here and are checked independently by complete light-link sky integration. Lengthy expressions are supplied in an electronic formula library; the main text gives representative finite expressions and physical limits.

\subsection{Tensor numerator and a finite expression}
After summing over tensor polarizations, the directional numerator is
\begin{equation}\label{eq:tensor-numerator}
 C_T(\xi,\zeta)=\frac14
 \left[2(c_{uv}-\xi\zeta)^2-(1-\xi^2)(1-\zeta^2)\right].
\end{equation}
This numerator and the complete propagation functions determine the elementary angular integrals; the TDI generation coefficients do not enter that integration step. A representative finite expression is the self-correlation monopole. For $a>0$, define
\begin{align}
 \mathcal C_1(a)&=\Ci(a)-\Ci(2a)+\ln2,\nonumber\\
 \mathcal C_2(a)&=-\Ci(a)+2\Ci(2a)-\Ci(3a)
                  +\ln(3/4),\label{eq:Cfunctions}\\
 \mathcal S_2(a)&=\Si(a)-2\Si(2a)+\Si(3a).\nonumber
\end{align}
Then
\begin{align}\label{eq:tensor-closed}
 g_{00}^{T}={}&\frac32\left[
 \mathcal C_2\cos(2a)-\mathcal S_2\sin(2a)\right]
 +\mathcal C_1 \nonumber\\
 &+\frac{5+\cos(2a)}{12}
 +\frac{2\sin(2a)-7\sin a}{4a}\nonumber\\
 &+\frac{5\cos a-4\cos(2a)-4}{4a^2}
 +\frac{4\sin(2a)-5\sin a}{4a^3}.
\end{align}
The inverse-frequency terms cancel at zero frequency, where the continuous value is zero. Direct evaluation in machine precision can lose this cancellation; the low-frequency series below or arbitrary-precision arithmetic should therefore be used in this regime. The remaining tabulated multipoles have analogous finite special-function expressions supplied individually in the electronic library. Higher orders use the same integration procedure with the corresponding directional polynomials and moments.

\subsection{Low-frequency coefficients and selection rules}
The monopoles expand as
\begin{align}\label{eq:tensor-low}
 g_{00}^{T}&=\frac35a^2-\frac{169}{840}a^4+
             \frac{85}{3024}a^6+O(a^8),\\
 h_{00}^{T}&=-\frac3{10}a^2+\frac{169}{1680}a^4-
             \frac{85}{6048}a^6+O(a^8).\nonumber
\end{align}
At low frequency the round-trip difference is described by an in-plane traceless response tensor $\mathcal D$. A two-dimensional traceless matrix obeys $\mathcal D^2\propto$ the plane projector. The leading self-correlation angular kernel therefore contains only $\ell=0,2,4$ and $m=0,\pm4$. For example,
\begin{align}\label{eq:soft-multipoles}
 \lim_{a\to0}\frac{g_{20}^{T}}{a^2}
 &=\frac{6\sqrt5}{35},&
 \lim_{a\to0}\frac{g_{40}^{T}}{a^2}
 &=\frac1{70},\\
 \lim_{a\to0}\frac{g_{22}^{T}}{a^2}
 &=0,&
 \lim_{a\to0}\frac{g_{42}^{T}}{a^2}
 &=0.\nonumber
\end{align}
Finite arm length generates the last two components at higher order. Odd cross-correlation multipoles also have different suppression orders:
\begin{align}\label{eq:odd-low}
 h_{11}^{T}&=-\frac{\ii\sqrt6}{336}a^5+O(a^7),\nonumber\\
 h_{33}^{T}&=\frac{\ii\sqrt{35}}{120}a^3+O(a^5),\\
 h_{53}^{T}&=\frac{\ii\sqrt{385}}{9240}a^3+O(a^5).\nonumber
\end{align}
For this channel pair, the octupole kernel can therefore exceed the dipole kernel in the long-wavelength regime. A lower multipole order does not by itself imply a larger response: the amplitude also depends on geometric selection rules and the first nonzero order in the propagation expansion.

\section{Frequency dependence and physical interpretation}

\subsection{Physical scales and the plotted frequency range}
The dimensionless light-travel parameter is related to physical frequency by
\begin{equation}\label{eq:afrequency}
 a=\frac{2\pi fL}{c}=\frac{f}{f_*},\qquad
 f_*=\frac{c}{2\pi L}.
\end{equation}
Table~\ref{tab:instruments} converts representative physical arm lengths and broad reference frequency windows into $a$. Figure~\ref{fig:instrument-ranges} compares them with the interval used for the ORF plots. The $10^{-4}$--$1\,\mathrm{Hz}$ windows for LISA, Taiji, and TianQin follow their design literature~\cite{lisa2024,taiji2026,tianqin2025}. For TianQin we use the nominal nearly equilateral geometry, $L\simeq\sqrt3\times10^5\,\mathrm{km}$. The ET triangle is included only as a ground-based propagation-scale reference: $L=10\,\mathrm{km}$ and the $1$--$10^4\,\mathrm{Hz}$ window used in the cited sensitivity data~\cite{etscience2023,etsensitivity2023}. These windows do not represent uniform or noise-weighted SGWB sensitivity. The length in $a$ is the physical arm length; optical storage is treated through the appropriate readout transfer function.

\begin{localtable}
\caption{Representative detector propagation scales. ET denotes the candidate triangle and is included for scale comparison only. Frequency intervals are broad reference windows.}
\label{tab:instruments}
\begin{ruledtabular}
\begin{tabular}{lccc}
Detector & $L$ ($\mathrm{km}$) & $f$ ($\mathrm{Hz}$) & Range of $a$\\
LISA & $2.5\times10^6$ & $10^{-4}$--$1$ & $0.00524$--$52.4$\\
Taiji & $3.0\times10^6$ & $10^{-4}$--$1$ & $0.00629$--$62.9$\\
TianQin & $\simeq1.73\times10^5$ & $10^{-4}$--$1$ & $0.000363$--$3.63$\\
ET triangle & $10$ & $1$--$10^4$ & $0.000210$--$2.10$
\end{tabular}
\end{ruledtabular}
\end{localtable}

\begin{localfigure}
\includegraphics[width=\columnwidth]{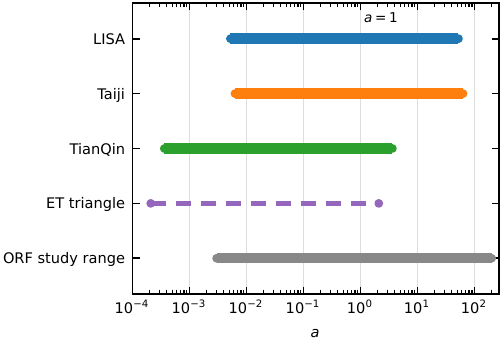}
\caption{Propagation-scale intervals for three space constellations and the $10\,\mathrm{km}$ ET triangle. The purple dashed segment is the ET reference; it does not denote a computed ET optical response. The gray bar gives the ORF sampling interval, $0.003\leq a\leq200$, and the vertical dashed line marks $a=1$. These intervals are not noise-weighted sensitivity bands.}
\label{fig:instrument-ranges}
\end{localfigure}

The space-mission windows include the finite-arm transition near $a\sim1$, so a leading long-wavelength approximation cannot describe them uniformly. For the ET triangle, $a=1$ corresponds to $f\simeq4.77\,\mathrm{kHz}$: its lower-frequency region has $a\ll1$, while its upper reference band reaches $a=O(1)$.

The main ORF curves span $0.003\leq a\leq200$ to resolve the transition and higher-frequency structure and to test analytic evaluation. The upper endpoint corresponds to approximately $3.82$, $3.18$, and $55.1\,\mathrm{Hz}$ for LISA, Taiji, and TianQin, respectively, above their listed windows. Lower-frequency TDI responses can be evaluated with the low-frequency expansions or stable complete expressions.

\subsection{Even multipoles and reference-frame phases}
Figures~\ref{fig:self}--\ref{fig:odd} display all 18 independent nonnegative-$m$ components of the $11/12$ functions within the chosen example range $\ell\leq5$. We plot $\Gamma/a^2$ over $0.003\leq a\leq200$, with spacing no larger than $0.15$ for $a\geq3$. Open symbols are complete light-link sky integrations. Reflection in the detector plane forbids odd $\ell+m$, and the inversion relation for the round-trip channels forbids odd-$\ell$ self responses. These statements concern the elementary channels; arbitrary delayed combinations are obtained from Eq.~\eqref{eq:gen-final}.

The six self-correlation components probe the mean intensity, latitudinal weighting, and different azimuthal patterns. The $g_{22}^{T}$ and $g_{42}^{T}$ components are suppressed in the leading long-wavelength limit but become nonzero at finite frequency. Propagation phases modify the directional weight seen by the detector. The self-correlation angular kernel is real, yet its projection onto complex harmonics can carry a fixed complex phase. In our frame,
\begin{equation}
 e^{-\ii m\pi/6}g_{\ell m}^{T}\in\mathbb R.
\end{equation}
This phase describes the azimuthal orientation of a harmonic relative to the detector, rather than an imaginary part of a physical autospectrum. A physical auto spectrum follows from the real combinations in Eq.~\eqref{eq:realharmonic}, or from a consistent sum over positive and negative $m$.

\begin{widefigure}
\includegraphics[width=\textwidth]{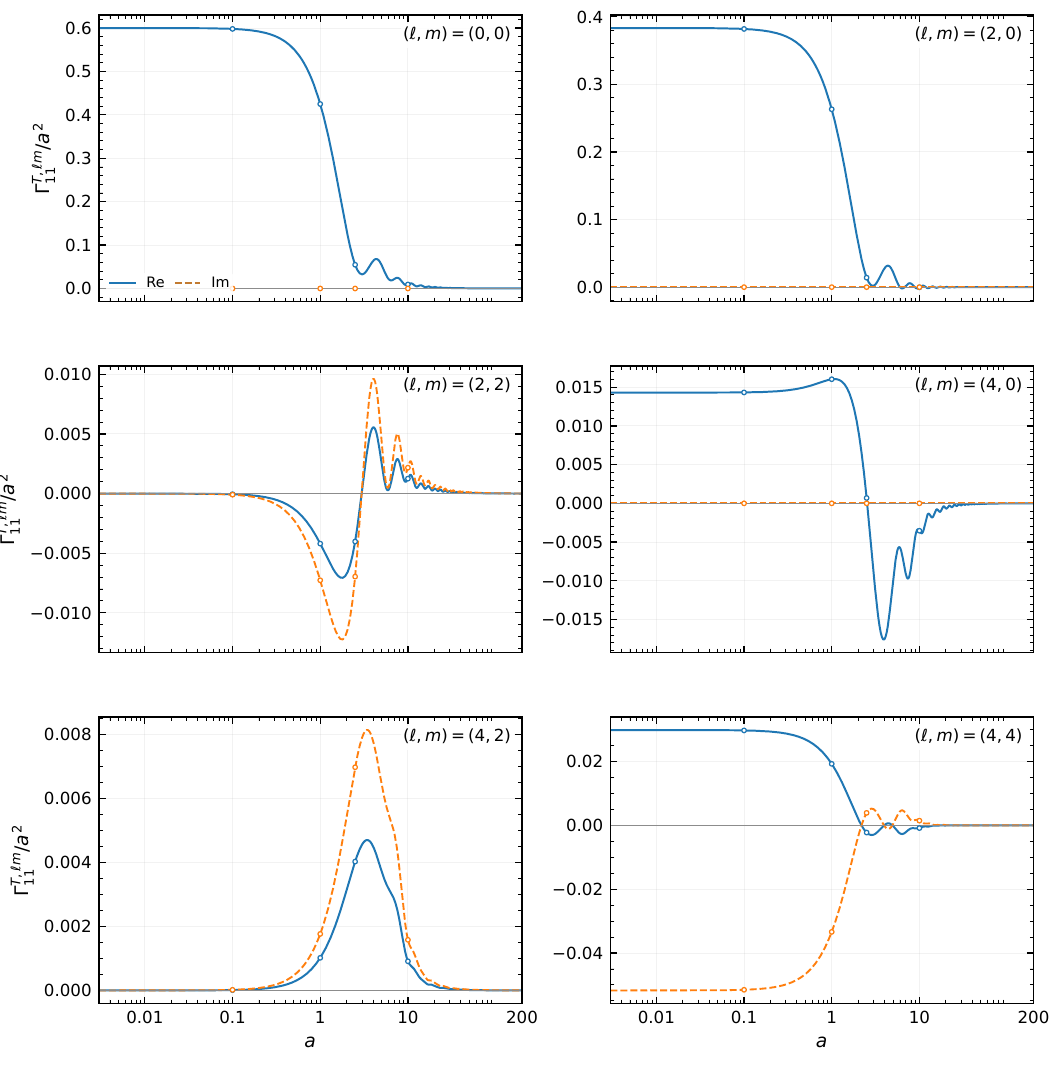}
\caption{All independent nonnegative-$m$ tensor $11$ self-correlation components in the example range $\ell\leq5$. Solid and dashed curves show real and imaginary parts; open symbols are independent sky integrations. Signed multipole projections are retained. The low-frequency suppression of $(2,2)$ and $(4,2)$ follows Eq.~\eqref{eq:soft-multipoles}.}
\label{fig:self}
\end{widefigure}

All six even cross-correlation components are real, as shown in Fig.~\ref{fig:even}. Their signs describe in-phase or antiphase channel projections for the specified angular weight. Zeros and oscillations arise jointly from finite propagation phases and the readout combination. Different multipoles have different frequency profiles, so frequency information and directional information are coupled. Their contribution to a sky reconstruction still depends on the noise covariance, observing time, and correlations among the sky templates.

The displayed normalization removes the common $a^2$ behavior of a leading Doppler correlation. It makes the onset of higher-order components easy to identify but does not impose a normalization on a physical background or its inferred angular power. In particular, the zeros of one multipole do not imply that a channel is blind to a general sky at that frequency: the full cross spectrum is a sum over all sky components. Likewise, a sign change of a projected kernel does not require a negative sky intensity.

\begin{widefigure}
\includegraphics[width=\textwidth]{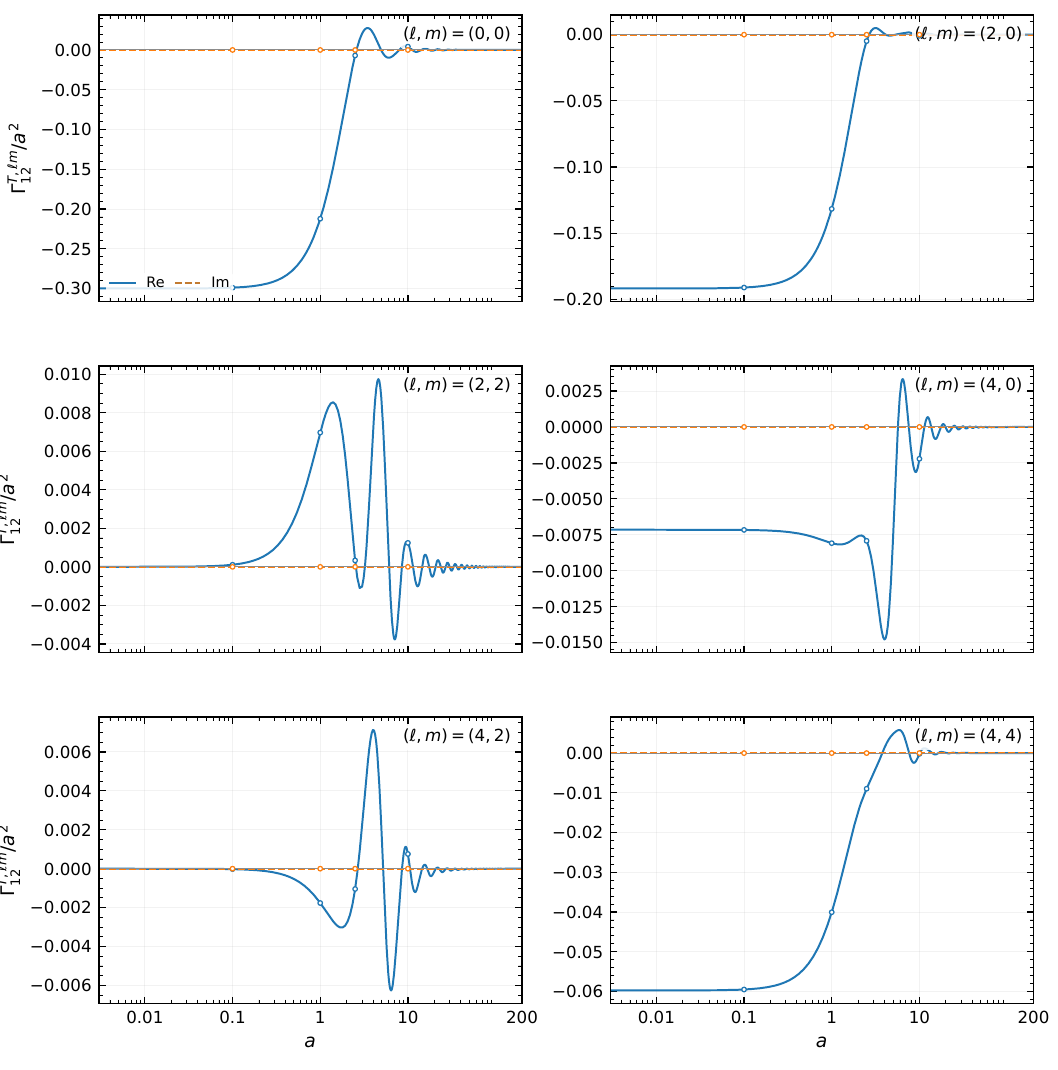}
\caption{All independent even tensor $12$ components through $\ell=5$. Their imaginary parts vanish by symmetry. Signs and zeros represent enhancement and cancellation in weighted channel correlations; they are not signs of the sky power.}
\label{fig:even}
\end{widefigure}

\subsection{Odd multipoles and directed cross correlations}
The odd cross-correlation components in Fig.~\ref{fig:odd} are purely imaginary and reverse sign when the channel order is exchanged. They arise from the direction-dependent phase difference between two readouts at finite light-travel time. The leading static angular kernel is even under $\bn\to-\bn$, so these components have no $a^2$ contribution. Equation~\eqref{eq:odd-low} shows that the first propagation correction is not the same order for every odd multipole.

For a real in-plane dipole sky, these kernels enter the quadrature component of the cross spectrum. In the time domain, they encode a directed lag correlation. Ordinary intensity anisotropy suffices to produce this structure; a circularly polarized background is not required. Retaining only the real part of the cross spectrum would therefore discard this channel pair's response to those intensity multipoles. A general complex delay combination need not preserve the real- and imaginary-part rules of $M_i$; its phase structure follows from the generation matrix.

The comparison between $h_{11}^T$ and $h_{33}^T$ is also useful for interpreting angular resolution. A kernel for a higher-order sky pattern can dominate a lower-order one over part of the band because the two channels sample different propagation phases. This is a response property, rather than a universal statement about which multipoles can be measured most precisely. A detectability comparison requires the corresponding sky amplitudes and the frequency-dependent noise weights.

\begin{widefigure}
\includegraphics[width=\textwidth]{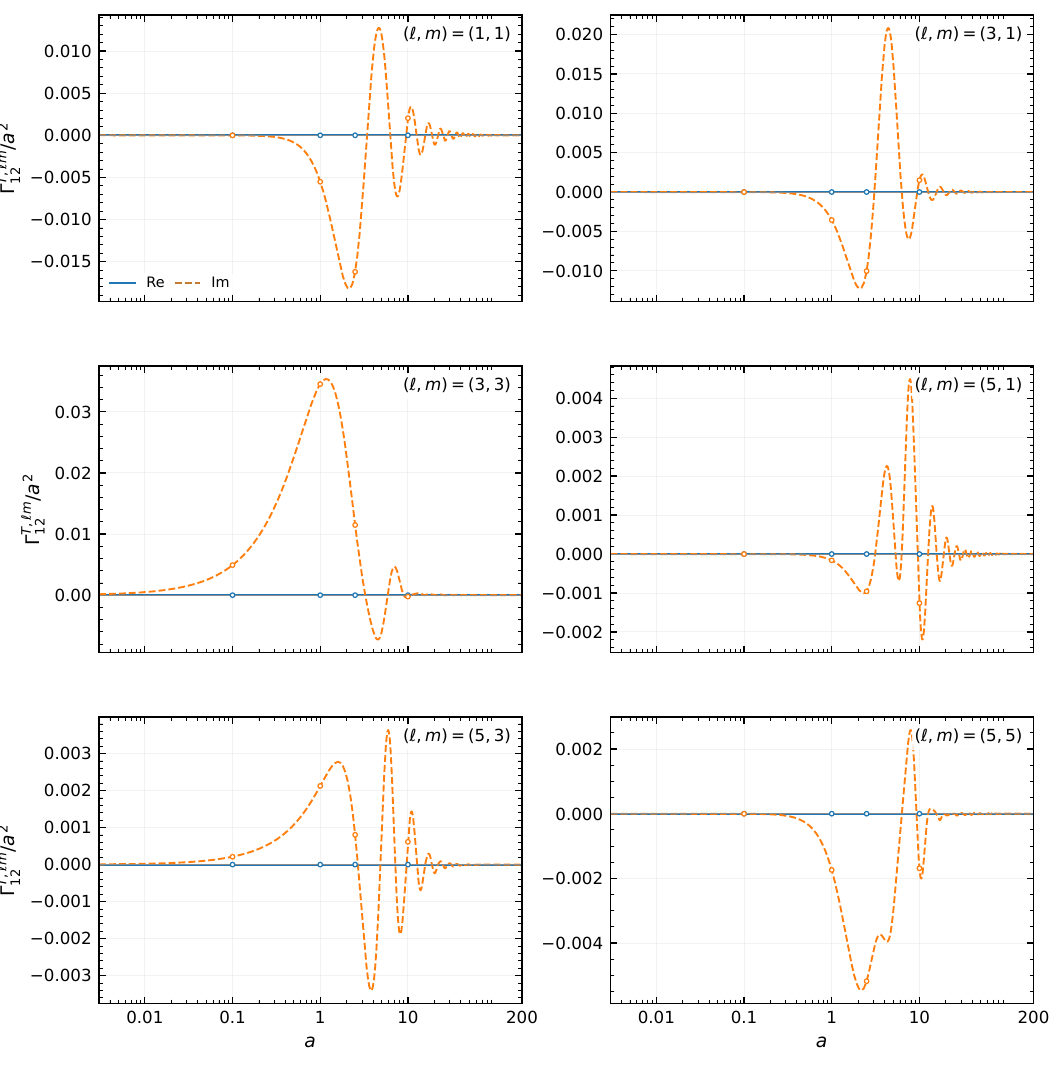}
\caption{All independent odd tensor $12$ components through $\ell=5$. The imaginary parts encode directed angular correlations, while the real parts vanish by symmetry. An analysis retaining only the real cross spectrum loses this channel pair's response to these intensity multipoles.}
\label{fig:odd}
\end{widefigure}

\subsection{Multipole norms and high-frequency tails}
At fixed angular order, define the rotationally invariant norm
\begin{equation}\label{eq:norm}
 \mathcal N_{CC',\ell}^{T}
 =\left[\sum_{m=-\ell}^{\ell}
       |\Gamma_{CC'}^{T,\ell m}|^2\right]^{1/2}.
\end{equation}
Figure~\ref{fig:norm} plots $\mathcal N_\ell^T/a^2$, including all positive and negative $m$. This norm summarizes the response within one angular order and removes the orientation dependence of an individual coordinate component. The $M_1$ auto response contains only even orders, while the $M_1M_2$ cross response retains odd orders. Their relative sizes can change with frequency, and there is no universal monotonic hierarchy in $\ell$. The norm is a measure of the detector kernel, distinct from a sky angular power spectrum $C_\ell$ or a noise-weighted sensitivity.

\begin{widefigure}
\includegraphics[width=\textwidth]{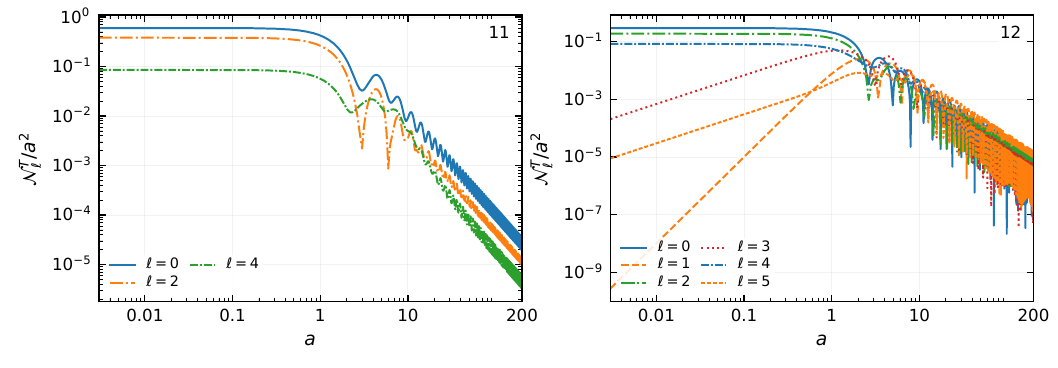}
\caption{Full-$m$ norms of the tensor auto- and cross-correlation multipoles. Odd cross responses are suppressed by additional propagation orders at low frequency. At finite frequency, the response amplitudes have no universal monotonic ordering in $\ell$.}
\label{fig:norm}
\end{widefigure}

Figure~\ref{fig:tail} resolves the high-frequency interval $40\leq a\leq200$. Near the vanishing link projection $\delta=1+\bn\cdot\bu=0$, the tensor directional numerator is $O(\delta^2)$, and $|q_a(\delta)|^2\leq\min(a^2,4/\delta^2)$. The angular norm of a single-link response is consequently bounded uniformly in $a$. A finite delay polynomial also has bounded coefficients on the unit circle. Applying Cauchy--Schwarz gives, for each fixed $\ell,m$,
\begin{equation}\label{eq:tensor-high-bound}
 |\Gamma_{CC'}^{T,\ell m}|=O(1),\qquad
 \mathcal N_{CC',\ell}^{T}/a^2=O(a^{-2}).
\end{equation}
This is an upper bound; particular channels or multipoles can decay faster because of cancellations. The oscillatory zeros in the plots are compatible with this envelope and do not imply a general cutoff of angular response at high frequency.

The extended linear-frequency panels complement the logarithmic-frequency overview. They separate the envelope from increasingly dense oscillations and expose the small values produced by cancellation. Stable evaluation in these regions requires absolute as well as relative error checks. Extending the curves beyond the representative design windows serves this mathematical and numerical purpose; a mission forecast must restrict the kernels to its usable band and combine them with its noise model.

\begin{widefigure}
\includegraphics[width=\textwidth]{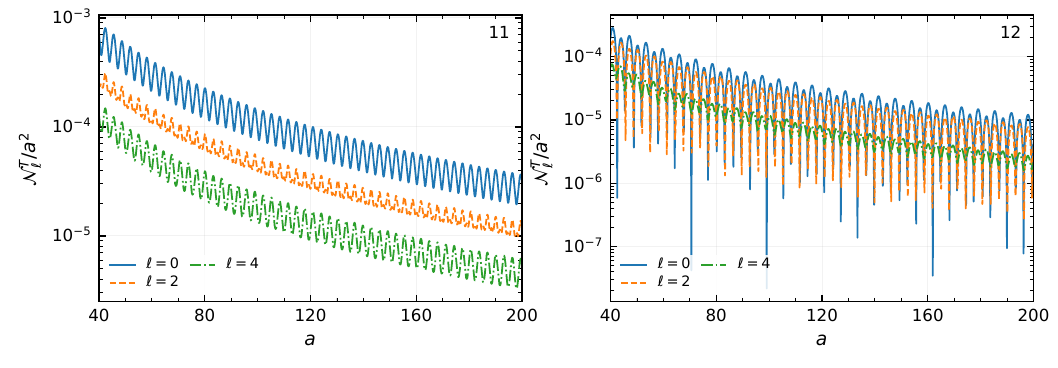}
\caption{High-frequency tensor multipole norms. A linear frequency axis resolves oscillations over $a=40$--$200$; the vertical axis is logarithmic. The extended interval shows cancellation zeros and the envelope that is less easily resolved in the main logarithmic-frequency plots.}
\label{fig:tail}
\end{widefigure}

\section{Representative TDI families, accuracy, and efficiency}

\subsection{Tensor responses of standard TDI and AET}
To illustrate the generation theorem for different combinations of optical paths, we use the first-generation Michelson $(X,Y,Z)$, Sagnac $(\alpha,\beta,\gamma)$, and Relay $(U,V,W)$ families~\cite{tdi}. AET is a common combination of the fundamental round-trip readouts; constructing it from $X,Y,Z$ additionally retains the common factor $1-z^2$. All families use the same link directions and the delay convention $z=e^{-\ii a}$. Appendix~\ref{app:standard-tdi} supplies six-link definitions and matrices for every ordered channel pair.

The short relations needed here are
\begin{align}\label{eq:standard-short}
 X&=(1-z^2)M_1,\nonumber\\
 \alpha&=(1-z)M_1-zS,\\
 U&=-zM_1-M_2-z^2M_3.\nonumber
\end{align}
Cyclic relabeling gives the remaining readouts. These examples respectively retain a common delay filter, a symmetric closed-loop contribution, and a combination of relative delays, all reusing the same $11/12$ functions.

\begin{widefigure}
\centering
\includegraphics[width=\textwidth]{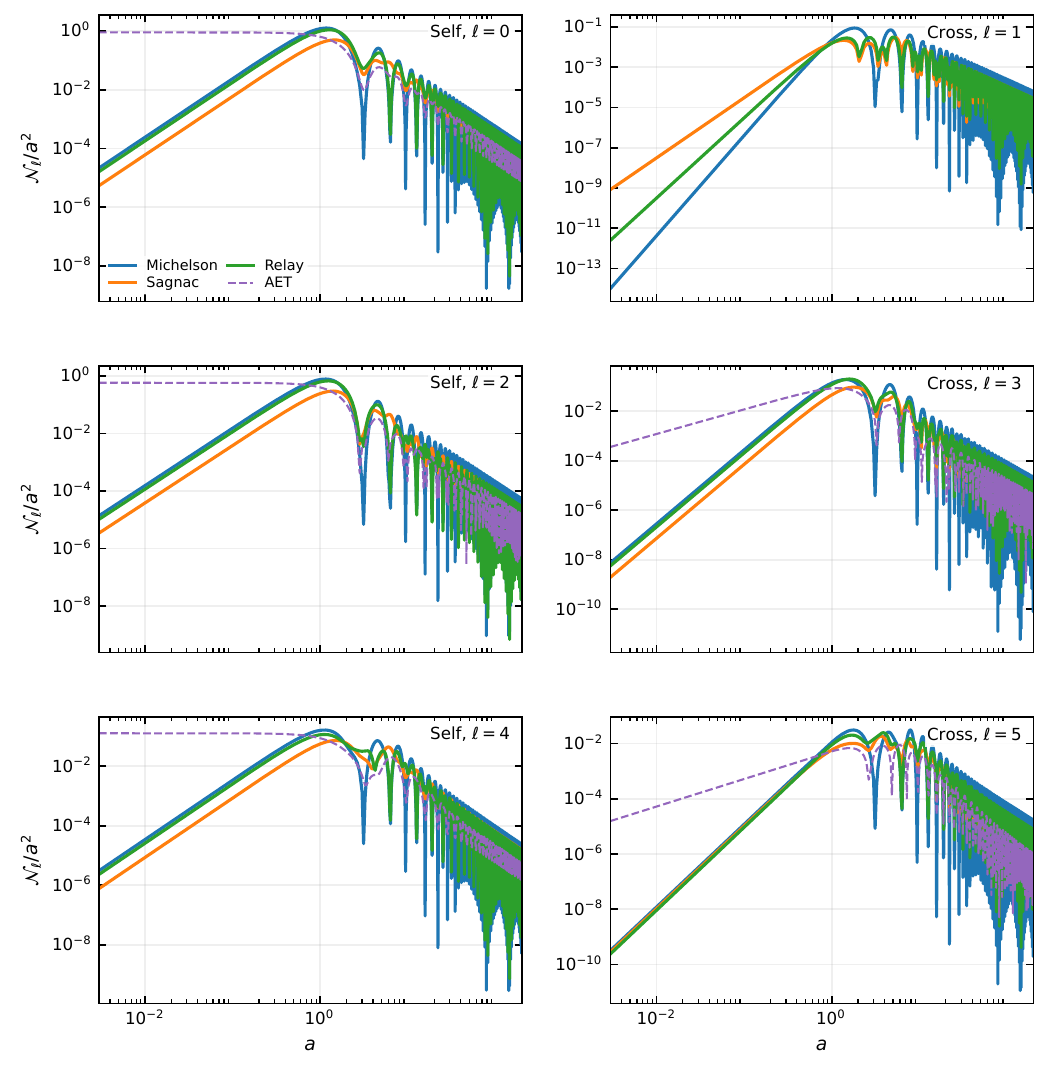}
\caption{Tensor multipole norms of three standard TDI families with their actual delay filters retained. Left: first-channel autocorrelations at $\ell=0,2,4$. Right: cross correlations of the first two channels at $\ell=1,3,5$. All positive and negative $m$ enter each norm, divided only by $a^2$. The interval is $0.003\leq a\leq200$; no noise or sky-power weighting is assigned. Purple dashed curves give the $M_i$-based AET reference: $AA$ on the left and $AE$ on the right. The $AE$ dipole is exactly zero and is omitted from the logarithmic ordinate.}
\label{fig:standard-norms}
\end{widefigure}
Figure~\ref{fig:standard-norms} compares oscillations, filtering zeros, and angular-order weights in common conventions. The leading monopole responses obey
\begin{equation}
 \Gamma_{XX}^{T,00}:\Gamma_{\alpha\alpha}^{T,00}:
 \Gamma_{UU}^{T,00}=4:1:3
 \quad\text{at common order }a^4.
\end{equation}
These ratios include the filters in the readout definitions and cannot be interpreted directly as sensitivity ratios. The additional suppression of the symmetric $S$ readout must be retained separately; it is not the leading behavior of an ordinary vertex Sagnac. Readout rank loss at a special frequency also differs from a removable pole of a representation, as verified directly in the appendix.

Every Michelson multipole matrix carries the common factor \(|1-z^2|^2=4\sin^2a\). Its zeros at \(a=n\pi\) therefore affect every angular order and originate in the readout filter, rather than a geometric selection of one sky multipole. The ordinary Sagnac family includes a loop response; its zeros also depend on cancellation between this contribution and the round-trip difference. Relay readouts combine different vertices and relative delays and do not share this common filter.

A full multipole norm is invariant under a rigid change of attitude, but changes under a readout transformation. The plotted norms describe the weighting of angular structure by each family and cannot directly rank their statistical information. When three independent readouts are retained, an invertible transformation of both signal and noise preserves Fisher information. Actual filtering degeneracies require separate treatment. This distinction determines how the differences in the plotted norms should be interpreted.

\subsection{Readout phases and directed cross spectra}
\begin{widefigure}
\centering
\includegraphics[width=\textwidth]{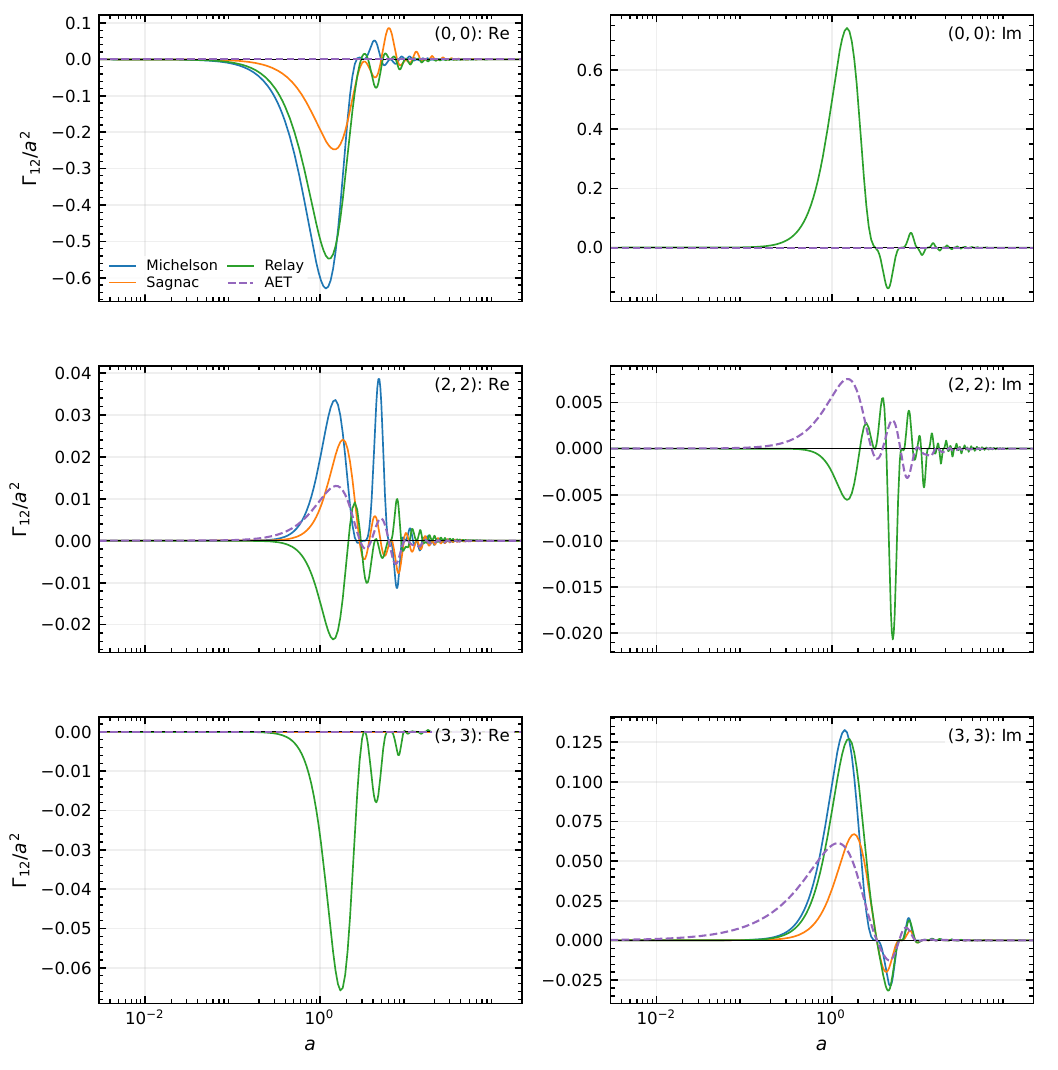}
\caption{Tensor cross multipoles for the first two readouts of each standard family. Rows select $(\ell,m)=(0,0),(2,2),(3,3)$; columns show real and imaginary parts of $\Gamma_{12}^{T,\ell m}/a^2$. Signs and delay phases are retained. Fundamental $M_1M_2$ quadrature selection rules are not imposed on general readouts. Purple dashed curves show the real or imaginary $AE$ component in the same convention. Its monopole is exactly zero; nonzero quadrupole and octupole components are retained.}
\label{fig:standard-phases}
\end{widefigure}
Sky weights and readout phases jointly determine the cross spectrum in Fig.~\ref{fig:standard-phases}. In particular, the Relay monopole satisfies
\begin{equation}\label{eq:relay-im-main}
 \Im\Gamma_{UV}^{T,00}
 =(2\sin a-\sin2a)(g_{00}^{T}-h_{00}^{T}).
\end{equation}
This imaginary part can be nonzero even for an isotropic sky. The purely imaginary odd multipoles and real even multipoles of the elementary round-trip pair therefore depend on that readout convention. An imaginary cross spectrum for a general TDI channel pair alone does not establish odd-multipole sky structure or circular polarization. Combining the full complex spectrum with a real sky intensity distribution yields a Hermitian covariance.

Cross-spectrum phases also depend on the relative time reference of the two readouts. A frequency-dependent rephasing multiplies their cross spectrum by the relative phase while leaving auto spectra unchanged. Comparisons therefore require fixed link directions, Fourier conventions, and delay definitions. The quadrature component has a statistical interpretation within the corresponding signal and noise model. Our plots retain the physical delay phases in these conventions, without rephasing channels to make their monopole cross spectrum real.

\subsection{Short formulas for general readouts}
For the monopole, $\varphi=\epsilon=1$, and Eq.~\eqref{eq:gen-S-elements} gives
\begin{equation}\label{eq:Smonopole}
 \Gamma_{SS}^{T,00}
 =\frac{3g_{00}^{T}+6h_{00}^{T}}{4\sin^2(a/2)}
 =\frac{a^6}{672}+O(a^8).
\end{equation}
Thus a channel not polynomially generated by the three $M_i$ is nonetheless expressed directly in terms of the same two elementary functions. Its low-frequency suppression and removable discrete limits are fixed by the complete numerator. The ordinary Sagnac response $\alpha_1$ follows similarly from Eq.~\eqref{eq:gen-alpha} and the generation matrix.

Angular selection for a general readout can differ from that of the elementary round-trip channels. For example, $C=M_1+zM_2$ obeys, for odd $\ell$,
\begin{align}\label{eq:arbitrary-odd-auto}
 \Gamma_{CC}^{T,\ell m}
 &=(z^*-z)h_{\ell m}^{T}=2\ii\sin(a)h_{\ell m}^{T},\\
 \Gamma_{CC}^{T,33}
 &=-\frac{\sqrt{35}}{60}a^4+O(a^6).\nonumber
\end{align}
The vanishing odd-$\ell$ elementary self function $g$ therefore does not forbid odd auto responses for arbitrary TDI channels. Figure~\ref{fig:arbitrary} shows the symmetric Sagnac monopole and the odd auto response of this delayed combination, illustrating two distinct uses of the same generation relation.

\begin{widefigure}
\includegraphics[width=\textwidth]{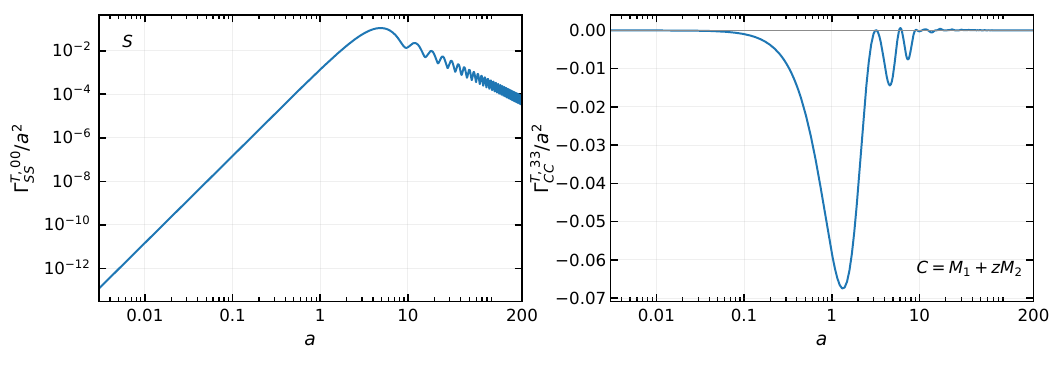}
\caption{Two analytically generated TDI examples. Left: the symmetric Sagnac tensor monopole $\Gamma_{SS}^{T,00}/a^2$, whose complete quotient is continuous at $a=2\pi n$. Right: the real odd auto response $\Gamma_{CC}^{T,33}/a^2$ for $C=M_1+zM_2$. Delay combinations can change the odd-multipole selection rule of elementary round-trip auto responses.}
\label{fig:arbitrary}
\end{widefigure}

Appendix~\ref{app:aet} gives the complete multipole formulas and cyclic-eigenchannel selection rules. Define the illustrative AET readouts by
\begin{equation}
 \begin{split}
 A&=(M_3-M_1)/\sqrt2,\\
 E&=(M_1-2M_2+M_3)/\sqrt6,\\
 \Tnull&=(M_1+M_2+M_3)/3.
 \end{split}
\end{equation}
Their monopole matrix is diagonal, with
\begin{align}
 \Gamma_{AA}^{T,00}
 &=\Gamma_{EE}^{T,00}=g_{00}^{T}-h_{00}^{T},\\
 \Gamma_{\Tnull\Tnull}^{T,00}
 &=\frac{g_{00}^{T}+2h_{00}^{T}}3
   =\frac{a^8}{6048}+O(a^{10}).\nonumber
\end{align}
Figure~\ref{fig:aet} displays representative $AA$, $EE$, and $\Tnull\Tnull$ auto responses together with $AE$, $A\Tnull$, and $E\Tnull$ cross responses. Orthogonality for an isotropic background does not diagonalize the covariance for a general sky; the physical cross spectrum must be assembled from the sky multipoles. We use the $1/3$ normalization for $\Tnull$; other null-channel conventions require the corresponding rescaling.

These examples place AET within the general construction. A new finite-delay channel is specified by its six link polynomials, checked against Eq.~\eqref{eq:noise}, and converted into the free basis using Eq.~\eqref{eq:gen-polyweights}. Its full multipole matrix then follows without an additional channel-pair sky integration. In applications, the noise covariance must be transformed together with the signal covariance before comparing the statistical performance of different readouts.

\begin{widefigure}
\includegraphics[width=\textwidth]{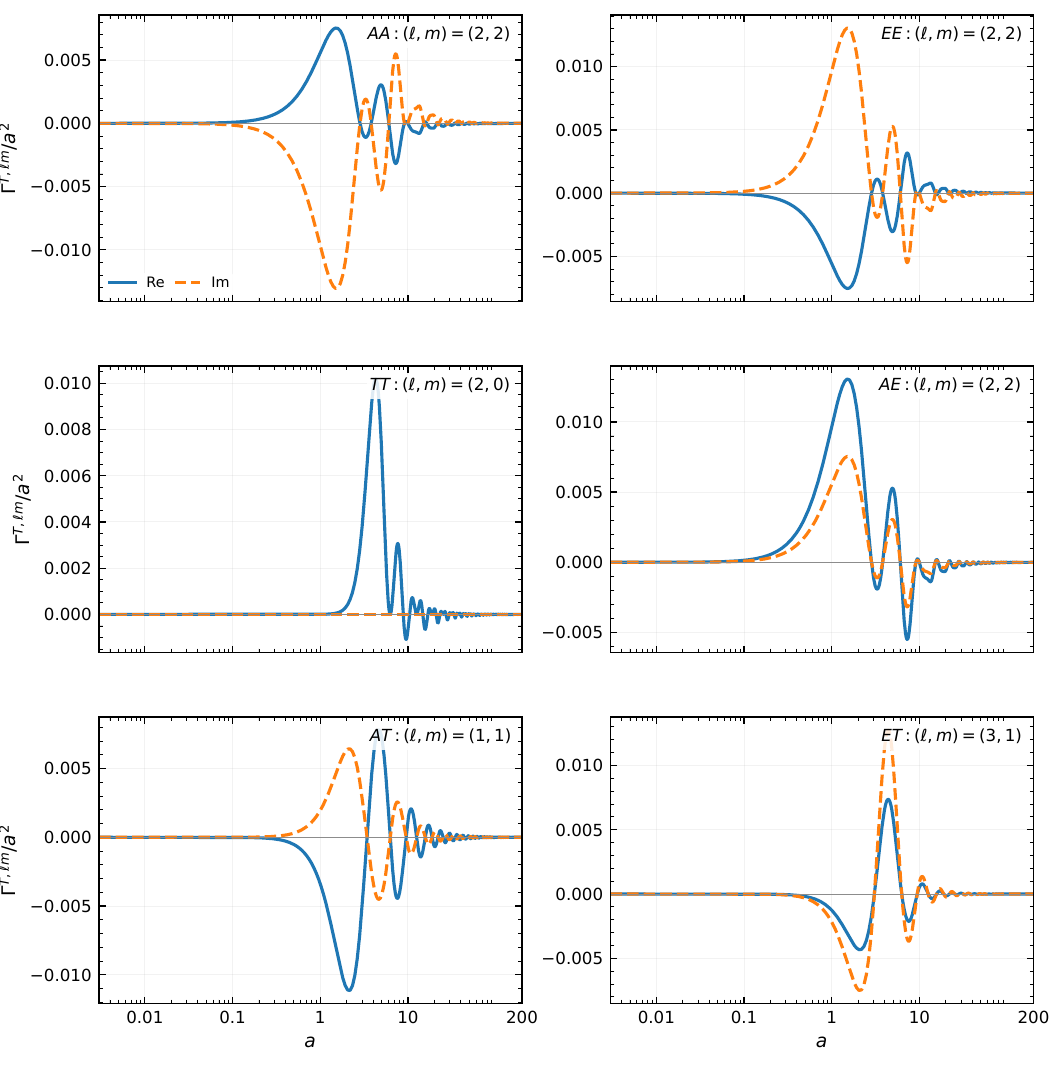}
\caption{Representative AET auto- and cross-response components generated from the $11/12$ matrix. Channel pairs and $(\ell,m)$ are marked inside the panels. In pair labels, $T$ denotes $\Tnull=(M_1+M_2+M_3)/3$; the superscript $T$ denotes tensor polarization. Blue solid and orange dashed curves show real and imaginary parts, respectively, with ordinate $\Gamma^{T,\ell m}/a^2$. The displayed $AA$ and $EE$ quadrupole components have opposite signs, whereas $\Tnull\Tnull$ allows only $m\equiv0\pmod3$. Nonzero cross responses show that monopole orthogonality does not diagonalize a general anisotropic sky.}
\label{fig:aet}
\end{widefigure}

\subsection{Accuracy and independent checks}
The elementary functions are tested using explicit Cartesian tensor contractions and the full sinc response in Eq.~\eqref{eq:q}. Sky integration uses Gauss--Legendre polar nodes and a uniform azimuthal grid. All tensor $11/12$ multipole components through $\ell=5$ are compared at eight positive and negative frequencies, yielding 576 comparisons with a maximum absolute residual of $1.38\times10^{-14}$. Zero frequency is evaluated by its exact limit. At very small nonzero frequency, high-precision finite expressions or Taylor expansions control cancellation.

An independent test of the generation theorem uses $S$, $\alpha_1$, and a polynomial channel generated with a fixed random seed. The analytic side reads only the final $g,h$ functions; the reference side combines the six complete light links directly. Eight frequencies cover $a=10^{-4}$, generic frequencies, $z=-1$, $z\approx1$, and the continuous limit at $z=1$. Each polarization class includes nine channel pairs and 36 multipoles, for 2592 comparisons. Across the four polarization classes, the 10368 comparisons have a maximum absolute residual of $6.18\times10^{-13}$; the maximum change under grid refinement is $3.21\times10^{-12}$. This test includes the Sagnac channel required for polynomial completeness and the discrete frequency limits.

The high-frequency extension uses dense linear sampling and a comparison of 45- and 85-digit evaluation. At $a=40,53,100,137,200$, 360 comparisons yield a maximum difference of $6.96\times10^{-48}$. Independent sky integration also tests eight tensor channel-multipole components at $a=200$. A $512\times2048$ grid gives a maximum absolute residual of $7.61\times10^{-14}$ and a maximum change of $2.53\times10^{-14}$ relative to a $384\times1536$ grid. These residuals measure agreement at the tested frequencies and grids; they are not a rigorous uniform error bound over the full frequency domain.

The standard-family checks directly assemble all six complete links, without using the generation matrix in the reference sky integral. We test $a=0.1,1,3,10,37$, all positive and negative $m$ through $\ell=5$, all nine ordered channel pairs, and $T,V,B,L$. Selected entries are also tested at $z=-1$, a nontrivial cube root, and nonzero frequencies with $z=1$. The 20,412 comparisons have maximum absolute residual $8.53\times10^{-14}$. At $a=37$, refinement from $96\times224$ to $144\times384$ changes the reference by at most $2.14\times10^{-13}$. All polarization classes use identical link definitions; this paper displays tensor results and Paper II's accompanying material supplies the others. Exact polynomial checks separately verify noise cancellation, family generation, and the special-point ranks. These tests establish agreement for the stated cases; they do not provide a uniform error bound over the full frequency range.

\subsection{Stable evaluation and matched-accuracy efficiency}
The cost of a finite analytic expression depends on its evaluation strategy and working precision. The generic evaluator sums expanded terms and prepares special functions for the entire library. The optimized implementation instead selects the dependencies of the requested responses and uses
\begin{equation}\label{eq:fast-e1}
 E_1(\ii r a)=-\Ci(|r|a)
 +\ii\operatorname{sgn}(r)[\Si(|r|a)-\pi/2]
\end{equation}
for \(a>0\), real \(r\ne0\), and the principal branch. Positive and negative imaginary arguments therefore share real sine- and cosine-integral values. A frequency-independent coefficient matrix then acts on shared phases, frequency powers, and special functions, eliminating repeated work across multipoles at one frequency.

At low frequency, we evaluate precomputed Taylor coefficients of \(\mathcal R=\Gamma/a^2\), through frequency degree 64. This frequency order is distinct from the angular range \(\ell\leq5\) of the explicit library. For the elementary round-trip channels, \(q_a(d)/a=\ii\int_0^1 e^{-\ii atd}\,\mathrm dt\) expresses the complete kernel as a finite sum of averaged plane waves. Each dimensionless path phase lies in \([-1,4]\), so the absolute phase difference is bounded by 5. Polarization-tensor norms and the spherical-harmonic addition theorem then imply the real-frequency truncation bound
\begin{equation}\label{eq:fast-remainder}
 |\mathcal R_{ij}^{P,\ell m}-\mathcal R_{ij,N}^{P,\ell m}|
 \leq B_P\sqrt{2\ell+1}\,
 \frac{(5|a|)^{N+1}}{(N+1)!},
\end{equation}
with \(i,j=1,2\), \(B_T=8\), \(B_V=4\), and \(B_B=B_L=2\). This bounds Taylor truncation, rather than certifying the entire floating-point implementation. The evaluator selects the required frequency order and estimates coefficient-rounding and summation errors. Strong cancellation in a closed expression or channel transformation triggers arbitrary-precision evaluation with a precision-refinement check. Special-function and assembly indicators still require independent numerical validation.

\subsection{Optimized computation at matched accuracy}
All conventional-accuracy timings require
\begin{equation}\label{eq:benchmark-tol}
 |\mathcal R-\mathcal R_{\rm ref}|\leq10^{-10}+10^{-8}|\mathcal R_{\rm ref}|,
\end{equation}
with references evaluated from the original finite expressions at 120 decimal digits. Both implementations are optimized. The direct sky integral precomputes nodes, weights, geometric projections, polarization contractions, and harmonic samples; it vectorizes the complete sinc response and shares the $11/12$ kernels across multipoles. Gauss--Legendre polar and periodic azimuthal grids are refined until both the reference error and the successive-grid change pass Eq.~\eqref{eq:benchmark-tol}. The analytic implementation prepares the requested component indices, symmetry maps, and shared coefficient data. Its frequency sweep evaluates phase and special-function arrays jointly. Shared power matrices were faster than Python-level Horner loops for the tested batches and are retained here.

The timings use one numerical linear-algebra thread on an AMD EPYC 9V74, with Python 3.12 and NumPy 2.3.5. Table~\ref{tab:efficiency} reports the median of seven repetitions after preparation; final ORF values are never cached. The previous and new analytic implementations are measured in this same run. Generic termwise multiprecision timings are retained separately in the reproducibility records as implementation diagnostics, rather than used as the main comparison against double-precision quadrature.

\begin{localtable}
\caption{Prepared evaluation times in milliseconds, at the matched tolerance of Eq.~\eqref{eq:benchmark-tol}. Old and New denote the previous optimized and new prepared analytic implementations. Q is optimized direct sky quadrature. A single entry is $g_{00}^T$; a batch consists of the 18 independent tensor components through $\ell=5$.}
\label{tab:efficiency}
\begin{ruledtabular}\begin{tabular}{ccrrrr}
$a$ & Entries & Old & New & Q & Q/New\\
0.003 & 1 & 0.035 & 0.023 & 1.23 & 54\\
0.003 & 18 & 0.052 & 0.033 & 1.41 & 43\\
1 & 1 & 0.041 & 0.024 & 1.40 & 58\\
1 & 18 & 0.054 & 0.031 & 1.47 & 48\\
10 & 1 & 0.033 & 0.020 & 1.57 & 79\\
10 & 18 & 0.045 & 0.030 & 2.22 & 73\\
200 & 1 & 0.029 & 0.022 & 115.06 & 5332\\
200 & 18 & 0.062 & 0.023 & 115.75 & 4948\\
\end{tabular}\end{ruledtabular}
\end{localtable}

The 18-component speedup is approximately $\SpeedOne$ at $a=1$ and $\SpeedHigh$ at $a=200$. A 32-frequency sweep over $0.003\leq a\leq200$, with 18 components per frequency, takes $\SweepMilliseconds\,\mathrm{ms}$, compared with $\QuadSweepMilliseconds\,\mathrm{ms}$ for quadrature on individually validated grids. The new sweep is $\SweepImprovement$ times faster than the previous analytic sweep. These are response-kernel costs. Both methods may reuse channel matrices, and their reuse is not counted as an advantage unique to analytic integration.

\subsection{Multiprecision accuracy and time to target}
The new multiprecision path retains exact source coefficients and returns multiprecision values throughout. It computes only the requested special functions, shares the positive and negative imaginary-$E_1$ arguments by conjugation, and reuses phases and powers across components. Precision-specific constants are prepared once. A separate comparison at equal working precision isolates expression organization: the 18-component implementation is approximately 2.2--3.4 times faster than the generic evaluator in the tested 60- and 120-digit cases. Working precision alone is not an accuracy guarantee.

The measurements use mpmath 1.3.0 with the gmpy2 2.3.2 arithmetic backend. For a direct accuracy--cost comparison, define the dimensionless mixed target
\begin{equation}\label{eq:mp-tolerance}
 |\mathcal R-\mathcal R_{\rm ref}|\leq10^{-p}(1+|\mathcal R_{\rm ref}|).
\end{equation}
We test $p=16,30,50,100$ at $a=0.003,1,10,200$ for $g_{00}^T$. The analytic evaluator compares two working precisions with a margin of $1/100$ of the target; it adds protection against low-frequency cancellation. Independent multiprecision sky quadrature uses full round-trip responses, invariant tensor contractions, reflection/inversion symmetries, and precomputed nodes. Its polar and azimuthal resolutions are increased together. Both the error against an independently implemented finite-expression reference and the successive-grid change must pass the target. All 16 tensor cases converged within the 90-second grid-search budget. The $p=50$, $a=1$ tests also pass for $V,B,L$; complete errors and grid histories are included in the supporting data. Equation~\eqref{eq:mp-tolerance} controls absolute error near zeros, and $p$ is not a claim of $p$ relative digits for a vanishing response.

\begin{table*}[t]
\caption{Multiprecision $g_{00}^T$ timings in milliseconds. First-call times include symbolic coefficient preparation and precision refinement (A), or node generation and grid refinement (Q); imports and construction of verification references are excluded. Repeated times reuse constants or nodes: A is the median of five calls, Q of three. The ratio uses repeated times. Both methods satisfy Eq.~\eqref{eq:mp-tolerance}; they need not use the same working precision.}
\label{tab:mp-efficiency}
\begin{ruledtabular}\begin{tabular}{ccrrrrr}
$a$ & $p$ & A first & Q first & A repeated & Q repeated & Q/A\\
1 & 30 & 8.91 & 131.0 & 0.808 & 51.9 & 64\\
1 & 100 & 9.67 & 745.7 & 1.072 & 242.0 & 226\\
10 & 50 & 11.21 & 1448.3 & 1.273 & 471.1 & 370\\
10 & 100 & 11.50 & 3205.8 & 1.601 & 1039.7 & 649\\
200 & 50 & 10.98 & 21064.1 & 2.200 & 6783.4 & 3083\\
200 & 100 & 17.79 & 48397.4 & 4.105 & 16711.5 & 4071\\
\end{tabular}\end{ruledtabular}
\end{table*}

\begin{widefigure}
\includegraphics[width=\textwidth]{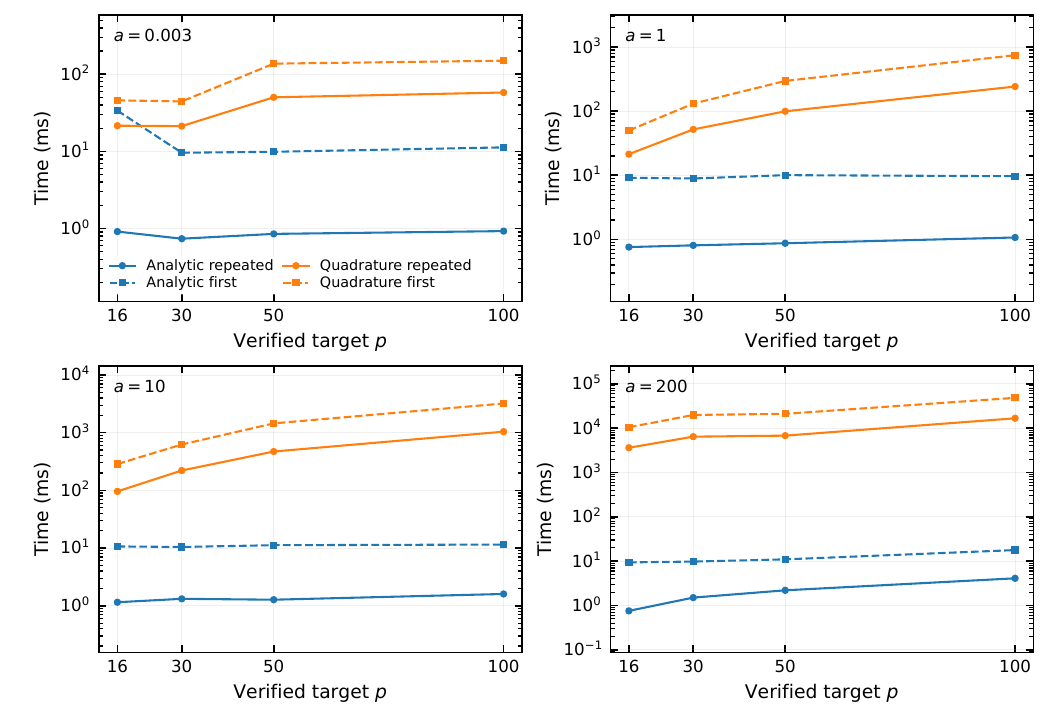}
\caption{Cost versus verified precision target for the tensor monopole $g_{00}^T$. Every point passes Eq.~\eqref{eq:mp-tolerance}; actual errors can be smaller than requested and are recorded separately. Blue curves denote analytic evaluation and orange curves optimized multiprecision sky quadrature. Solid curves show prepared repeated evaluation, dashed curves first-call preparation and refinement. The panels share the physical coordinate-frame dimensions used in Fig.~\ref{fig:norm}. These benchmarks quantify this implementation, not an optimum over all quadrature algorithms.}
\label{fig:precision-cost}
\end{widefigure}

The independent tests demonstrate precision extensibility beyond machine doubles. They do not imply that selecting a working precision automatically delivers that many accurate digits, or that arbitrary-precision quadrature cannot reach high accuracy. Analytic evaluation removes angular discretization, while still requiring accurate special functions and cancellation control. Figure~\ref{fig:precision-cost} measures the cost of meeting prescribed targets. The measured gain depends on frequency, target, and preparation reuse and need not increase monotonically with $p$.

Removable delay points admit an additional local strategy. For the normalized Sagnac monopole set $Q(a)=3[g_{00}^P(a)+2h_{00}^P(a)]/a^2$ and $a_n=2\pi n\ne0$. Writing $Q(a_n+\delta)=\sum_{j\geq2}q_j\delta^j$ removes its known double zero before division:
\begin{equation}\label{eq:local-sagnac}
 \frac{\Gamma_{SS}^{P,00}(a_n+\delta)}{(a_n+\delta)^2}
 =\frac{\sum_{j\geq0}q_{j+2}\delta^j}{\sinc^2(\delta/2)}.
\end{equation}
The coefficients follow from analytic Taylor arithmetic on the finite formulas. This avoids subtracting two nearly cancelling responses and dividing by $|z-1|^2$. The implementation tests degree refinement; 16 checks at $\delta=0,\pm10^{-4},10^{-12}$ for $T,V,B,L$ agree with the original continuous-limit evaluator at the $p=100$ target. Local preparation is separate from subsequent evaluation and is useful for repeated nearby frequencies.

The new frequency evaluator passes 10,384 comparisons against 150-digit references, including negative frequencies, negative $m$, symmetry zeros, and $a=0$; its largest error divided by Eq.~\eqref{eq:benchmark-tol} is $5.98\times10^{-4}$. The new multiprecision path passes 1,232 checks at $p=30,100$, including $a=10^{-8}$; the largest error-to-target ratio is $7.46\times10^{-37}$. Earlier independent full-link tests remain archived with their original scopes. Precision-refinement indicators and these finite test sets are not uniform interval certificates. Ill-conditioned downstream inference still requires precision chosen for its own error propagation.
\FloatBarrier

\section{Conclusions and outlook}
We have combined delay-polynomial channel algebra, constellation symmetries, and the directional-projection integration method of Zhou and Li~\cite{zhouli,source} to construct anisotropic TDI ORFs systematically. The problem is to generate complete finite-propagation responses for arbitrary admissible finite-delay readouts without repeating a sky integral for every channel pair. Starting from the laser-noise cancellation identities, we establish the free basis $(M_1,M_2,S)$ of the full polynomial channel module. The resulting construction separates channel coefficients from elementary angular kernels, so a change of readout becomes an algebraic transformation of reusable response data. This connects the established algebraic description of TDI and geometric response relations~\cite{dhurandhar02,kudoh05} to a complete channel-generation and evaluation procedure.

For the tensor, vector, breathing, and longitudinal intensity sectors, rotation and inversion reduce the required angular data to the two families $g_{\ell m}^P$ and $h_{\ell m}^P$. The channel algebra itself does not depend on polarization. The representative formulas and figures in this paper describe tensor intensity; other polarization contractions enter through their corresponding elementary kernels. The method applies at arbitrary fixed finite angular order, whereas the distributed explicit library is tabulated through $\ell=5$. Polynomial completeness, single-frequency rank, and continuity of the physical response remain distinct: rational channel coordinates can have poles even though the full response is finite. Taking the relevant combinations and limits jointly provides both the mathematical interpretation and a stable evaluation strategy.

The construction yields measurable computational gains. At the matched tolerance of Eq.~\eqref{eq:benchmark-tol}, the prepared batch of 18 components is approximately $\SpeedOne$ times faster than optimized direct sky quadrature at $a=1$ and $\SpeedHigh$ times faster at $a=200$ in the benchmark considered here. A 32-frequency sweep takes $\SweepMilliseconds\,\mathrm{ms}$, compared with $\QuadSweepMilliseconds\,\mathrm{ms}$ for quadrature, and is $\SweepImprovement$ times faster than the previous analytic implementation measured in the same run. Multiprecision expressions share exact coefficients and special functions; independent full-response monopole integrations verify the prescribed mixed error targets through $p=100$. At $a=10$, the $p=100$ repeated evaluation is approximately $\MPMidGain$ times faster than the tested multiprecision quadrature. These results concern the stated kernels, implementations, and tolerances. They are empirical accuracy and cost measurements, rather than uniform error certificates or end-to-end inference speedups.

The reusable angular kernels make multichannel covariance construction and readout comparisons practical over broad frequency and angular ranges. Finite propagation phases expose multipoles suppressed at long wavelength, alter cross-spectrum phases, and produce transfer zeros. Anisotropic AET cross responses demonstrate that isotropic signal orthogonality does not imply a diagonal signal covariance for a general sky. The localized treatment of removable delay points and adjustable arithmetic precision also support the study of weak responses and cancellations. These capabilities provide computational inputs for subsequent sky reconstruction and noise-aware inference; sensitivity and identifiability additionally depend on sky spectra, instrumental noise, observing time, scanning, and calibration.

The present framework assumes a static, reciprocal, equal-arm equilateral constellation and finitely many commuting delays. Extending it to unequal arms and moving constellations requires an appropriate channel-generation structure and additional angular kernels. Higher angular orders are another computational extension, with expression size and conditioning to be measured explicitly. The range up to $a=200$ tests finite-propagation structure and evaluation cost, rather than prescribing a common mission science band. The ET scale comparison in Sec.~V motivates a separate ground-based extension: ideal resonant Fabry--Perot responses admit a sky-independent optical factor~\cite{rakhmanov2008}, but an ET response calculation requires a separately validated optical-readout mapping and normalization; sensitivity studies additionally require its noise model~\cite{etdesign2020,etscience2023,etgeometry2026,virtuoso2025}.

\begin{acknowledgments}
This work was supported by the National Natural Science
Foundation of China (Grant Nos. 12575072,
12547101 and 125B2102), the Fundamental Research
Funds for the Central Universities Project (Grant No.
2024IAIS-ZD009), the National Key Re-
search and Development Program of China (Grant No.
2021YFC2203004).
\end{acknowledgments}

\section*{Data and code availability}
No observational data were used. The accompanying collection~\cite{TDIFormulaLibrary} contains the final analytic expressions, a formula-viewing Jupyter notebook, figure data, and numerical validation and timing records. The explicit tables cover the 11/12 intensity coefficients of the $T,V,B,L$ sectors through $\ell=5$, in the conventions of this paper. The public release is limited to these formulas and supporting numerical data; the authors' numerical implementations are not publicly released. Access for verification may be arranged with the corresponding author upon reasonable request. The timing results describe the implementations tested in this paper, rather than the performance of distributed software. Repository details remain to be completed before submission: \FormulaRepositoryCitation.

\appendix
\section{Representative TDI channels: transformations and multipole responses}
\label{app:standard-tdi}
We translate three standard first-generation readout families into the six-link conventions of this paper. Static, reciprocal, equal arms are assumed. Link directions and Fourier phases must be converted before importing another convention. The standard families and their physical interpretations are reviewed in Ref.~\cite{tdi}.

\subsection{Common conventions and generation relation}
For a channel vector $\boldsymbol C=\mathsf C\boldsymbol M$, the multipole response follows from
\begin{equation}\label{eq:standard-generation}
 \boldsymbol\Gamma_{\rm family}^{P,\ell m}
 =\mathsf C\mathsf K_{\ell m}^{P}\mathsf C^\dagger.
\end{equation}
\subsection{Definitions and transformations of standard TDI}
\subsubsection{Definitions and delay coefficients}
Let $\sigma$ denote $1\to2\to3\to1$. We define
\begin{align}\label{eq:standard-defs}
 X&=(1-z^2)M_1,\quad Y=\sigma X,\quad Z=\sigma^2X,\nonumber\\
 \alpha&=y_{21}+zy_{32}+z^2y_{13}
          -y_{31}-zy_{23}-z^2y_{12},\\
 \beta&=\sigma\alpha,\quad\gamma=\sigma^2\alpha,\nonumber\\
 U&=z\gamma-\beta,\quad V=\sigma U,\quad W=\sigma^2U.\nonumber
\end{align}
The common delay filter of $X,Y,Z$ is retained; these are not identified with the unfiltered $M_i$. The symmetric readout $S$ uses the convention of the main text with its common delay removed.

In the order $(12,21,13,31,23,32)$, representative coefficient rows are
\begin{align}
 \boldsymbol d_X&=(1-z^2)(z,1,-z,-1,0,0),\nonumber\\
 \boldsymbol d_\alpha&=(-z^2,1,z^2,-1,-z,z),\\
 \boldsymbol d_U&=(1-z^2,0,0,z-z^3,z^2-z,z^3-1).\nonumber
\end{align}
Cyclic relabeling gives the other rows. Direct substitution yields polynomial identities $\boldsymbol d_C\mathsf B=0$, rather than cancellation at only one frequency. Since $M_1+M_2+M_3=(z-1)S$,
\begin{align}\label{eq:standard-basis}
 \alpha&=(1-z)M_1-zS,\nonumber\\
 U&=(z^2-z)M_1+(z^2-1)M_2+(z^2-z^3)S\\
  &=-zM_1-M_2-z^2M_3.\nonumber
\end{align}
The final relation compresses the relative link-path delays into polynomial weights on the three fundamental readouts.

\subsubsection{Complete multipole matrices}
With $\boldsymbol M=(M_1,M_2,M_3)^{\mathsf T}$ and $\mathsf J_{ij}=1$, write each family as $\boldsymbol C=\mathsf C\boldsymbol M$ for $z\ne1$, where
\begin{align}\label{eq:standard-matrices}
 \mathsf C_X&=(1-z^2)\mathsf I,\nonumber\\
 \mathsf C_\alpha&=(1-z)\mathsf I-\frac{z}{z-1}\mathsf J,\\
 \mathsf C_U&=\begin{pmatrix}-z&-1&-z^2\\-z^2&-z&-1\\-1&-z^2&-z\end{pmatrix}.\nonumber
\end{align}
Every auto- and cross-correlation multipole follows from the common generation relation in Eq.~\eqref{eq:standard-generation}.
Writing $q=e^{2\pi\ii m/3}$, we have $z^*=z^{-1}$ and $q^*=q^2$ on the unit circle. Each entry is a finite linear combination of $g$ and $h$. The electronic library lists all nine ordered channel pairs for both $\epsilon=\pm1$. Complex delays do not generally preserve the fundamental relation $\Gamma_{ji}=(-1)^\ell\Gamma_{ij}$. The applicable general identity is
\begin{equation}
 \Gamma_{ji}^{P,\ell,-m}=(-1)^m
       (\Gamma_{ij}^{P,\ell m})^*.
\end{equation}
A Hermitian physical covariance follows after combining the multipoles with a real sky intensity distribution. Planar reflection still forces components with odd $\ell+m$ to vanish.

The Relay monopole provides a short phase check. With $b=1+z+z^2$,
\begin{align}\label{eq:relay-monopole}
 \Gamma_{UU}^{00}&=3g_{00}+(|b|^2-3)h_{00},\nonumber\\
 \Gamma_{UV}^{00}&=(2z^{-1}+z^2)(g_{00}-h_{00})+|b|^2h_{00},\\
 \Im\Gamma_{UV}^{00}&=(2\sin a-\sin2a)(g_{00}-h_{00}).\nonumber
\end{align}
Although the monopole $g,h$ are real, a general TDI cross spectrum can acquire an imaginary part from its readout phase. These identities apply to every intensity class in our model.

\subsection{AET multipoles and selection rules}
\label{app:aet}
This subsection applies the two-function theorem to the AET readouts.
We use $A=(M_3-M_1)/\sqrt2$, $E=(M_1-2M_2+M_3)/\sqrt6$,
and $\Tnull=(M_1+M_2+M_3)/3$. For each intensity sector considered here,
\begin{equation}\label{eq:aet-matrix}
 \boldsymbol\Gamma_{A E\Tnull}^{P,\ell m}
 =\mathsf O\mathsf K_{\ell m}^{P}\mathsf O^{\mathsf T},\quad
 \mathsf O=\begin{pmatrix}
 -1/\sqrt2&0&1/\sqrt2\\
 1/\sqrt6&-2/\sqrt6&1/\sqrt6\\
 1/3&1/3&1/3
 \end{pmatrix}.
\end{equation}
The last row is not normalized to unity. The standard orthonormal channel is
$T_o=(M_1+M_2+M_3)/\sqrt3=\sqrt3\,\Tnull$.
Replacing one $\Tnull$ index by $T_o$ multiplies the response by $\sqrt3$;
replacing both multiplies it by 3.
Write $g=g_{\ell m}^{P}$, $h=h_{\ell m}^{P}$,
$q=e^{2\pi\ii m/3}$, and $s_m=1+q+q^2$.

For even $\ell$, the six ordered upper-triangular entries are
\begin{align}\label{eq:aet-even}
 \Gamma_{AA}&=\tfrac12(1+q^2)g-q^2h,\nonumber\\
 \Gamma_{EE}&=\chi_m\Gamma_{AA},\nonumber\\
 \Gamma_{\Tnull\Tnull}&=\tfrac{s_m}{9}(g+2h),\nonumber\\
 \Gamma_{AE}&=\frac{(q^2-1)g+2(1-q)h}{\sqrt{12}},\\
 \Gamma_{A\Tnull}&=\frac{(q^2-1)g+(q-1)h}{3\sqrt2},\nonumber\\
 \Gamma_{E\Tnull}&=\frac{(1-2q+q^2)g-(1+q-2q^2)h}{3\sqrt6},\nonumber
\end{align}
where $\chi_m=1$ for $m\equiv0\pmod3$ and $-1$ otherwise.
For odd $\ell$, $g=0$ and all three diagonal entries vanish, whereas
\begin{align}\label{eq:aet-odd}
 \Gamma_{AE}&=\tfrac{s_m}{\sqrt3}h,\nonumber\\
 \Gamma_{A\Tnull}&=\tfrac{2q^2-1-q}{3\sqrt2}h,\\
 \Gamma_{E\Tnull}&=\tfrac{1-q}{\sqrt6}h.\nonumber
\end{align}
The lower triangle follows from $\Gamma_{O'O}^{\ell m}=(-1)^\ell\Gamma_{OO'}^{\ell m}$.
An individual complex-harmonic matrix is not required to be Hermitian;
Hermiticity applies to the covariance obtained by combining it with a real sky intensity distribution.
All intensity entries vanish when $\ell+m$ is odd.

\subsubsection{Cyclic eigenchannels and angular selection}
The cyclic permutation acts on the standard orthonormal channels as
\begin{align}
 \sigma A&=-\tfrac12A+\tfrac{\sqrt3}{2}E,\nonumber\\
 \sigma E&=-\tfrac{\sqrt3}{2}A-\tfrac12E,\qquad\sigma T_o=T_o.
\end{align}
The complex channels $v_\pm=(A\pm\ii E)/\sqrt2$ have eigenvalues
$\lambda_\pm=e^{\mp2\pi\ii/3}$, with $\lambda_{T_o}=1$.
Rotational covariance requires
\begin{equation}\label{eq:aet-character}
 [q-\lambda_C\lambda_{C'}^*]\Gamma_{CC'}^{\ell m}=0.
\end{equation}
The two $v_\pm$ auto responses and the $T_o$ auto response therefore allow only $m\equiv0\pmod3$.
The $v_+v_-$ and $v_-v_+$ entries allow $m\equiv1$ and 2, respectively.
For even $\ell$, $\Gamma_{EA}=\Gamma_{AE}$, so
$\Gamma_{v_+v_+}=(\Gamma_{AA}+\Gamma_{EE})/2$:
the cross terms cancel without requiring $\Gamma_{AE}=0$.
This gives $\Gamma_{EE}=-\Gamma_{AA}$ when $m\not\equiv0\pmod3$;
the explicit formulas give equality when $m\equiv0\pmod3$.

The AET channel space decomposes into one trivial representation and one real two-dimensional representation.
The complex channels $v_\pm,T_o$ diagonalize the cyclic operation.
They do not diagonalize every anisotropic multipole matrix, which carries a nontrivial rotational weight for general $m$.
In particular, $\Gamma_{AE}$, $\Gamma_{A\Tnull}$, and $\Gamma_{E\Tnull}$ can be nonzero.
The intensity monopole and a noise covariance satisfying the standard symmetry assumptions are diagonal in AET~\cite{bartolo}.

The AE response vanishes for every dipole component, and both AT and ET vanish when $m\equiv0\pmod3$. For the other residue classes,
\begin{equation}
 \Gamma_{E\Tnull}^{\ell m}=\frac{2q+1}{\sqrt3}\Gamma_{A\Tnull}^{\ell m}.
\end{equation}
The factor is $\pm\ii$, so these two cross responses have equal multipole norms. Related geometric relations in the tensor sector were established in Ref.~\cite{kudoh05}.

For $\mathcal N_{OO',\ell}=[\sum_m|\Gamma_{OO'}^{\ell m}|^2]^{1/2}$,
\begin{align}
 \mathcal N_{AA,\ell}&=\mathcal N_{EE,\ell},\nonumber\\
 \mathcal N_{AE,\ell}^2&=3\!\sum_{m\equiv0\ (3)}|h_{\ell m}^{P}|^2,
 \quad\ell\ {\rm odd},\\
 \mathcal N_{\Tnull\Tnull,\ell}^2
 &=\tfrac19\!\sum_{m\equiv0\ (3)}|g_{\ell m}^{P}+2h_{\ell m}^{P}|^2,
 \quad\ell\ {\rm even}.\nonumber
\end{align}
These are applications of the general channel theorem; the AET choice does not restrict the admissible readouts.

\subsection{Special frequencies and actual rank loss}
The determinants are
\begin{align}
 \det\mathsf C_X&=(1-z^2)^3,\nonumber\\
 \det\mathsf C_\alpha&=-(z-1)(z^2+z+1),\\
 \det\mathsf C_U&=-(z-1)^2(z^2+z+1)^2.\nonumber
\end{align}
The rational Sagnac representation has an apparent pole at $z=1$, but its six-link definition is finite and all three rows approach $-S$. This does not imply a divergent GW response. An invertible-basis argument also cannot be applied at a true transfer zero. Table~\ref{tab:standard-ranks} gives the rank of the six-link coefficient rows, not a polarization Fisher rank or a guarantee of response to every sky.
\begin{localtable}
\caption{Six-link channel ranks at specified delays. The other nontrivial cube root has the same ranks as $\omega_3=e^{2\pi\ii/3}$.}
\begin{ruledtabular}
\begin{tabular}{lccc}
Family & $z=1$ & $z=-1$ & $z=\omega_3$\\
Michelson &0&0&3\\
Sagnac &1&3&2\\
Relay &0&3&1
\end{tabular}
\end{ruledtabular}
\label{tab:standard-ranks}
\end{localtable}
The direct polynomial definitions confirm these degeneracies. Complete angular kernels are continuous at every listed point. The Sagnac apparent pole is handled in the free basis $(M_1,M_2,S)$ or through analytic numerator derivatives; true filtering zeros retain their zero response.

\subsection{Relation to AET and noise}
Let $\mathsf O_o$ be the orthonormal version of the AET transformation above, with third row $(1,1,1)/\sqrt3$. The map from $A,E,T_o$ to a family is $\mathsf R=\mathsf C\mathsf O_o^{\mathsf T}$. A stable Sagnac form is
\begin{equation}
 \mathsf R_\alpha=\mathsf O_o^{\mathsf T}
 \operatorname{diag}\left(1-z,1-z,-\frac{1+z+z^2}{z-1}\right).
\end{equation}
For any common noise covariance $\mathsf N_o$, use $\mathsf N_{\rm family}=\mathsf R\mathsf N_o\mathsf R^\dagger$, including cross correlations. Retaining two readouts requires selecting the corresponding signal and noise submatrices. Assigning independent unit diagonal noise separately to each family would not compare the same measurement model.

\section{Independent multiprecision quadrature}
\label{app:numerics}
The monopole reference integral can be evaluated without a polarization-frame construction at every node. Let $\boldsymbol u,\boldsymbol v$ be the two unit arms at one vertex, $\boldsymbol u\cdot\boldsymbol v=1/2$, and set $\mu=\bn\cdot\boldsymbol u$, $\nu=\bn\cdot\boldsymbol v$. Combining each incoming link with the delayed outgoing link gives the normalized round-trip factor
\begin{align}
 H_a(\mu)={}&\ii e^{-\ii a(1+\mu)/2}\nonumber\\
 &\times\left[\sinc\frac{a(1+\mu)}2+e^{-\ii a}\sinc\frac{a(1-\mu)}2\right].
\end{align}
With the tensor normalization of Sec.~II, the directional intensity of their difference is
\begin{align}
 \mathcal K_T={}&A(\mu)|H_a(\mu)|^2+A(\nu)|H_a(\nu)|^2\nonumber\\
 &-2B(\mu,\nu)\operatorname{Re}[H_a(\mu)H_a(\nu)^*],\\
 A(\mu)={}&\tfrac14(1-\mu^2)^2,\\
 B(\mu,\nu)={}&\tfrac14\left[2(\tfrac12-\mu\nu)^2-(1-\mu^2)(1-\nu^2)\right].
\end{align}
Thus $g_{00}^T/a^2=(8\pi)^{-1}\int\mathcal K_T\,\dd\Omega$. This integrand follows directly from the complete light-link responses; no integrated coefficients enter it. It is even under reflection in the detector plane and under $\bn\mapsto-\bn$. Using these symmetries, an $N\times4N$ full-sphere Gauss--Legendre--periodic rule requires only $N^2$ evaluated nodes. The reference implementation retains these numerical optimizations and uses multiprecision nodes, geometry, trigonometric functions, and summation throughout. Breathing, longitudinal, and vector checks replace only the invariant polarization contractions.

The grid search uses $N=8,16,24,32,48,64,96,128,192,256,384$ and working precision $p+25$. Each accepted result passes both Eq.~\eqref{eq:mp-tolerance} against the original finite formula at $p+100$ working digits and the successive-grid-change test. The latter supplies an operational stopping indicator; the former supplies an additional validation criterion and is excluded from timing. Neither criterion is an interval certificate. First-call timing includes all attempted grids and node preparation; repeated timing uses the accepted grid. First-call measurements are single runs, whereas repeated timings use the medians stated in Table~\ref{tab:mp-efficiency}. Detailed errors, all attempted grids, software versions, and preparation conventions are retained in the reproducibility records.

The distributed frequency Taylor table has degree 64 and double-precision coefficients; it is used only for ordinary-accuracy evaluation. Higher-precision requests use exact-source multiprecision expressions, with precision refinement. At a nonzero Sagnac delay point, the local expansion in Eq.~\eqref{eq:local-sagnac} instead uses arbitrary-precision Taylor coefficients. The tested local plans retain degree 56, compare with degree 28, and remove the exact constant and linear zeros before dividing. Their preparation cost is paid once per polarization, delay point, degree, and working precision. The intended input offset $\delta$ is supplied directly, avoiding loss of the offset when $2\pi n+\delta$ is first rounded to machine precision.

\bibliographystyle{apsrev4-2}
\bibliography{references}

\end{document}